\documentclass[journal]{IEEEtran}

\usepackage{algorithm}
\usepackage{algorithmic}
\usepackage{cite}
\usepackage{amsmath,amssymb,amsfonts}
\usepackage{graphicx}
\usepackage{textcomp}
\usepackage{xcolor}
\usepackage{subfigure}
\usepackage{bm}
\usepackage{amsthm}
\usepackage{xurl}
\usepackage{hyperref}

\usepackage[subtle,tracking=normal]{savetrees}

\theoremstyle{definition}
\newtheorem{theorem}{\textbf{Theorem}}

\newtheorem{proposition}{\textbf{Proposition}}

\allowdisplaybreaks[4]
\def\BibTeX{{\rm B\kern-.05em{\sc i\kern-.025em b}\kern-.08em
    T\kern-.1667em\lower.7ex\hbox{E}\kern-.125emX}}

\begin{document}
\title{Fairness as an Investment: Dynamic Participation and Long-Run Profit in Virtual Power Plants}

\author{Liudong~Chen,~\IEEEmembership{Student~Member,~IEEE,}
        and~Bolun~Xu,~\IEEEmembership{Member,~IEEE}
\thanks{L. Chen and B. Xu are with the Department
of Earth and Environmental Engineering, Columbia University, New York, NY, 10027 USA e-mail: \{lc3671,~bx2177@columbia.edu\}.}

}

\maketitle

\begin{abstract}
We show that incorporating fairness constraints into virtual power plant (VPP) operations can incentivize consumer participation and thus improve the aggregator’s long-run profitability. VPPs rely on sustained participation from heterogeneous consumers to provide a variety of grid services whose timing and frequency are often uncertain. As a result, consumers' willingness and ability to provide flexibility evolve over time, creating a dynamic link between past participation and future resource availability. We develop a dynamic aggregation framework to study how fairness in service allocation affects future participation and long-run profitability. By linking current dispatch decisions to future resource availability, we show that fairer allocations can strengthen consumer engagement, expand aggregate availability, and create additional value during high-price and high-demand events. To balance fairness and operational efficiency, we introduce a slack-augmented allocation mechanism that preserves most of the participation benefits from fairness while avoiding unnecessary reductions in service procurement. We derive conditions under which the resulting availability gains outweigh the short-run cost of redistribution and validate the approach using real-world consumer behavior and electricity market data from Norway. 
\end{abstract}

\begin{IEEEkeywords}
Virtual power plants; fairness; multi-period aggregation; dynamic participation; long-run profitability
\end{IEEEkeywords}

\IEEEpeerreviewmaketitle

\section{Introduction}
Virtual power plants (VPPs) are rapidly emerging in practice to aggregate distributed energy resources (DERs) for coordinated participation in energy markets. In the United States, there is approximately 30~GW of VPP capacity that can shave about 3.75\% of peak demand at a cost of \$43/kW-year, which is 37.7\% and 56.6\% lower than utility-scale batteries and gas peaker plants, respectively~\cite{DOE_VPP}. A rapidly growing class of VPPs is battery-backed, exemplified by Tesla~\cite{Tesla_VPP}, Base Power~\cite{BasePower1}, and Every Electric~\cite{EveryElectric}. VPP models are also expanding to electric vehicles (EVs) and smart-home devices through platforms such as WeaveGrid and EnergyHub~\cite{WeaveGrid, EnergyHub}.

VPP aggregators coordinate enrolled consumers by offering incentive payments for service provision, which is represented by energy delivery that supports wholesale market participation, distribution-level congestion relief, feeder peak shaving, demand response, and other flexibility services. However, consumers differ substantially in their ability to provide service due to heterogeneous DER ownership, operating preferences, and willingness to participate. For example, higher-income households often own larger DER portfolios, such as multiple EVs or larger batteries~\cite{flexibility2}, and may be willing to supply services at lower cost. As a result, an aggregator that maximizes short-run profit tends to repeatedly dispatch more flexible consumers, raising fairness concerns about unequal participation and compensation. These concerns are increasingly reflected in policy and program practice. The U.S. Department of Energy has emphasized affordable VPP development and broader access to participation~\cite{DOE_VPP, DOE_VPP2}. Pacific Gas and Electric in California has implemented a ``first-of-its-kind'' VPP initiative requiring that at least 60\% of participants come from disadvantaged or low-income communities~\cite{PGE_SAVE_VPP}. EnergyHub similarly reports that aligning program design with user behavioral trends and equity goals can increase EV-driver enrollment~\cite{Energyhub_report}. Tesla and Base Power also emphasize meaningful compensation and maintaining a minimum level of backup capability for participating households~\cite{Tesla_VPP, BasePower1}.

% Although fairness considerations are increasingly important in practice, 
The existing literature mainly studies fairness in single-period settings and treats fairness as a trade-off against profit. In contrast, we ask: \textit{Can fairness instead serve as an investment that increases long-run VPP profitability?} We answer in the affirmative: when participation responds endogenously to recent activity, fairness can expand future aggregate availability, which is subsequently monetized during extreme market conditions.
Our main contributions are:
\begin{itemize}
\item We develop a multi-period VPP aggregation framework in which consumer availability evolves with past participation, and formulate a non-anticipatory operating problem with dispatch fairness constraints.

\item We show that fairness affects profits through two channels: reallocation across consumers that increases future availability under an increasing and strictly concave state transition; and curtailment that imposes only costs.

\item We propose a slack-augmented fairness design that eliminates curtailment while preserving participation benefits, and prove that it weakly dominates strict fairness in realized profit.

\item We derive market price and energy requirement conditions under which the availability gains from fairness outweigh its short-run costs.

\item Using real-world data from Norway, we demonstrate that appropriately chosen fairness levels can improve long-run VPP profitability.

\end{itemize}
The paper is organized as follows. Section~II reviews the background literature, and Section~III presents the formulation. Section~IV analyzes the fairness constraint, and Section~V presents the slack-augmented fairness design and its theoretical results. Section~VI presents the case study, and Section~VII is the conclusion.

\section{Literature Review}
\subsection{VPP business models in practice}\label{sec:vpp_practice}
Commercial VPPs typically operate through centralized dispatch of enrolled DER assets, issue dispatch instructions directly, and compensate participants for the service provided. In battery-backed programs, Tesla monitors and dispatches enrolled Powerwall batteries and compensates participants at \$2--\$5 per kWh discharged, substantially above retail electricity rates~\cite{Tesla_VPP}. Base Power leases batteries directly to households and operates them centrally~\cite{BasePower, BasePower1}. Every Electric similarly installs residential batteries with no upfront cost and recovers costs through demand-response revenue, sharing the resulting margin with participating consumers~\cite{EveryElectric}. Beyond batteries, platforms such as WeaveGrid and EnergyHub coordinate EV chargers, smart thermostats, batteries, and other flexible DERs through asset-level integrations~\cite{WeaveGrid, EnergyHub}, while Pacific Gas and Electric's SmartAC program directly cycles enrolled thermostats with consumer consent~\cite{PGE_SmartAC}. 

A common feature across these models is that participation is voluntary and must be sustained over time. Consumers may opt out or leave if the program does not provide sufficient value; for example, in Every Electric's model, infrequent dispatch reduces compensation opportunities and weakens the value of keeping the battery. This highlights the link between recurring payments, consumer retention, and sustained resource availability.
Industry reports on scaling VPPs emphasize that recurring compensation and frequent engagement are central to sustaining participation~\cite{LBNL_scaling, LBNL_DER_utility, NESO_DSR}. Broader socioeconomic evidence also shows that consumer retention depends on past participation~\cite{bies2021push}; and in energy demand response, repeated economic incentives can produce persistent responses that outlast experimental treatments~\cite{ito2018moral}. Studies of high-DER systems further emphasize that flexibility must be reliably available, not merely nominally enrolled, to create system-level value~\cite{Brown2021DERaggregation}.

\subsection{Fairness in VPP and dynamic allocation}
The existing VPP-fairness literature has incorporated fairness through several criteria, including Rawlsian min-max objectives~\cite{aggregator_consider}, Gini-based or variance-based indices~\cite{define_a, jain1984fairness}, and individual-level fairness notions that compare system outcomes across different fairness criteria and enforcement levels~\cite{ourwork}. These frameworks provide useful ways to measure and enforce fairness, but they typically treat fairness as a static or one-shot constraint within each operating period. As a result, fairness primarily appears as a restriction on the aggregator's feasible set, thereby reducing short-run profit or business competitiveness. What is missing is the closed-loop interaction emphasized in the practical VPP business model: participation is voluntary, payments are recurring, and consumers' future availability can depend on how frequently and fairly they are dispatched.

Fairness in dynamic allocation has been studied more broadly in operations research and computer science. Across-agent fairness balances outcomes among heterogeneous users~\cite{ghodsi2011dominant}; across-time fairness enforces fairness over a horizon and can reduce its welfare cost~\cite{lodi2024fairness}. Restless-bandit fairness limits the time since an arm was last served~\cite{li2022efficient}, and other formulations evaluate fairness only at the terminal stage~\cite{blum2022multi}. Methodologically, these works characterize optimal policies under fairness constraints~\cite{balakrishnan2022scales} or derive online algorithms with regret bounds~\cite{si2022enabling}. Our contribution is different: \emph{we study how fairness affects the evolution of consumer availability itself}. This dynamic channel can reverse the usual conclusion that fairness only reduces competitiveness, since fairer dispatch may sustain future participation and expand the aggregator's profitable resource pool.

\begin{figure}
    \centering
    \includegraphics[width=0.99\linewidth]{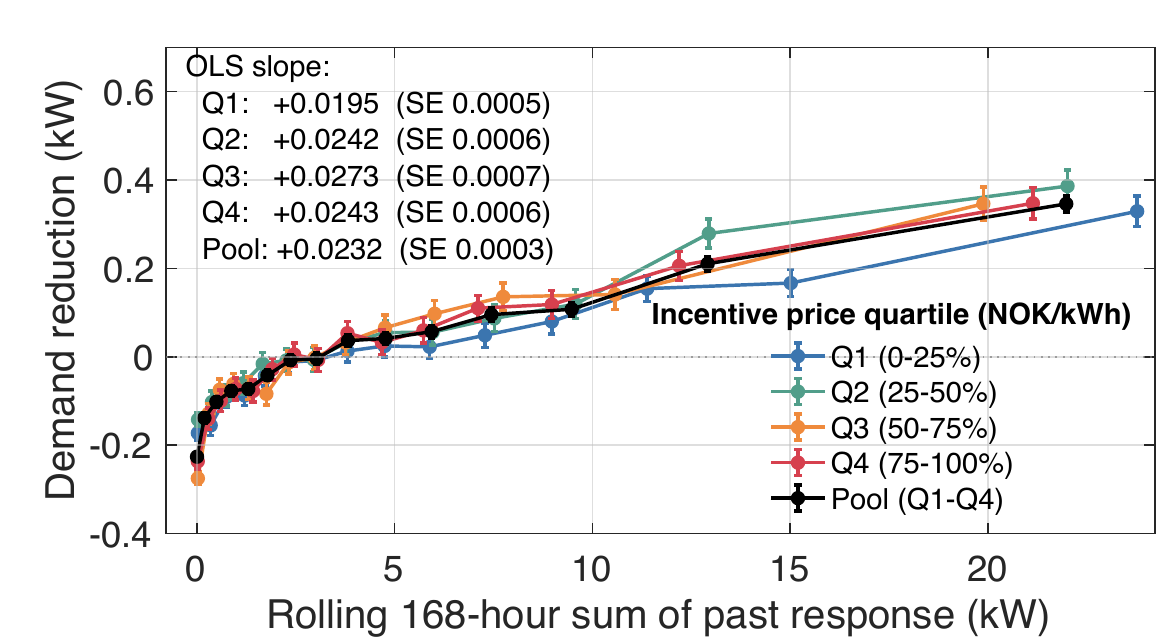}
    \caption{Empirical evidence of participation reinforcement from the Norway iFlex experiment. Households with greater recent response history exhibit larger demand reductions during incentive events, with diminishing marginal effects, consistent with an increasing and concave participation-state transition. OLS is ordinary least squares, and SE is the standard error of the OLS slope.}
    \label{fig:empirical}
\end{figure}

\section{Model Formulation}
\subsection{Oracle problem}

We first state the \emph{Oracle} problem the aggregator would solve if it could observe all consumer and market realizations across the horizon. We adopt a \emph{centralized-dispatch} convention consistent with commercial VPP programs. Let $\mathcal T:=\{1,\dots,T\}$ denote the set of periods. At each $t\in\mathcal T$, the VPP faces at most one service request with a maximum aggregate energy requirement $\bar D_t$ and a compensation rate $\pi_t$. The process ${(\pi_t,\bar D_t)}_{t=1}^T$ is defined on a probability space $(\Omega,\mathcal F,\mathbb P)$ with filtration ${\mathcal F_t}$. In many settings, high-energy requirement events are associated with greater system scarcity and therefore higher compensation. 

Each consumer $i\in\mathcal N$ provides energy $D_{i,t}\in[0,A_{i,t}]$ in period $t$. The \emph{availability} $A_{i,t}\in[0,\bar A]$ is the maximum energy consumer $i$ can provide during the dispatch period, reflecting both physical DER constraints and participation willingness. For battery-backed VPPs, this is primarily dependent on the state of the stored energy reserve. Since real response periods range from $15$ minutes to several hours, $A_{i,t}$ captures both instantaneous power capability and energy sustainability, i.e., the total energy the consumer can contribute without violating constraints. We assume availability is bounded above by $\bar A$ for all $i,t$.

As we discussed in Section~\ref{sec:vpp_practice}, more frequent participation reinforces consumer engagement in practical VPP programs.
Fig.~\ref{fig:empirical} shows that in the iFlex field experiment in Norway, household response is increasing and concave in cumulative recent exposure. We thus motivate the following availability evolution dynamics
\begin{subequations}\label{eq:oracle_dynamics}
\begin{align}
S_{i,t+1} &= \beta S_{i,t} + \rho D_{i,t}, \label{eq:state_update}\\
A_{i,t+1} &= g(S_{i,t+1}, \sigma_t),  \label{eq:availability_map}
\end{align}
\end{subequations}
where $S_{i,t}$ is a latent participation state, $\beta\in(0,1)$ is the persistence of past participation, $\rho>0$ is the marginal effect of current dispatch on the next-period state, and $\sigma_t$ is a random shock realized after $t$ that affects $A_{i,t+1}$ through the increasing map $g$ (with $g(S,\sigma)\le\bar A$ for all $S\in[0,\bar S]$ and $\sigma$), where $\bar S$ is an upper bound on the state space. Initial states are normalized: $S_{1,1}=\cdots=S_{N,1}=:S_1$. The linear update provides a tractable state representation; the nonlinear, bounded map $g$ captures saturation.

Each consumer $i$ has a marginal cost of providing energy $c_i>0$, capturing both (i) the per-unit compensation required to induce participation and (ii) the effective network cost of dispatching consumer $i$, since consumers located farther from the point of connection may create larger distribution losses. We adopt a linear cost specification,
\begin{equation}\label{eq:linear_cost}
C_i(D_{i,t}) = c_i D_{i,t},
\quad
\pi_t>c_i,\ \forall i,t,
\end{equation}
consistent with battery-backed VPP operations, where degradation, the opportunity cost of stored energy, and backup-reserve loss are all incurred per unit of energy delivered. Under linear compensation contracts, modeling participation reinforcement through dispatch is equivalent to modeling it through recurring compensation exposure. 

The \textbf{Oracle} problem maximizing the long-run profit is defined as
\begin{subequations}\label{eq:oracle}
\begin{align}
\max_{\{D_{i,t}\}}\ &\mathbb E \left[\sum_{t=1}^T \sum_{i\in \mathcal N} \big(\pi_t - c_i\big) D_{i,t}\right] \label{eq:oracle_obj}\\
\text{s.t.}&\ (\forall t\in\mathcal T) \nonumber\\
& 0\le D_{i,t}\le A_{i,t}\, , \  \forall i\in\mathcal N\label{eq:oracle_box}\\
& \sum_{i\in\mathcal N} D_{i,t} \le \bar D_t,  \label{eq:oracle_aggregate}\\
& A_{i,t+1} = g(\beta S_{i,t}+\rho D_{i,t},\sigma_t)\, , \  \forall i\in\mathcal N \label{eq:oracle_state}
\end{align}
\end{subequations}
The formulation reveals two channels of profit. The \emph{contemporaneous} channel: each unit dispatched in period $t$ earns marginal profit $\pi_t-c_i$ via~\eqref{eq:oracle_obj}. The \emph{intertemporal} channel: current dispatch $D_{i,t}$ feeds into the next-period state $S_{i,t+1}$ via~\eqref{eq:oracle_state}, expanding future availability and enlarging the feasible set~\eqref{eq:oracle_box} for future profitable sales. An optimal Oracle operation must thus balance these channels intertemporally.

\subsection{Non-anticipatory operation with fairness constraints}
The Oracle problem is not implementable in practice because the aggregator lacks access to the consumer-availability evolution model \eqref{eq:oracle_dynamics}, which is rooted in private consumer behaviors. Without this constraint, the VPP aggregation problem loses its inter-temporal structure and becomes a \textbf{Greedy operation} that only maximizes profit from the current time period based on observed $(\pi_t,\bar D_t)$:
\begin{align}
\max_{D_{t}\in \mathbb R^N}\ & \Pi_t=\sum_{i\in\mathcal N}(\pi_t-c_i)D_{i,t},
\text{ s.t. \eqref{eq:oracle_box} and \eqref{eq:oracle_aggregate} } \label{eq:operating}
\end{align}
Where, although the states still evolve according to~\eqref{eq:oracle_dynamics}, they are not modeled by the aggregator. A natural solution in this case is rank-based dispatch: order consumers by $c_i$ and dispatch the lowest-cost consumers first until the energy requirement or availability cap binds. This myopic policy captures the contemporaneous channel but ignores the intertemporal one. 

A practical way to approximate this intertemporal diversification benefit is to impose a fairness constraint that limits repeated concentration on low-cost consumers and thereby expands future aggregate availability. We therefore introduce a \emph{dispatch fairness constraint} that bounds per-period dispersion in energy provision, motivated by equitable-access requirements in EV charging~\cite{demand_ev} and minimum-access guarantees in emergency load shedding~\cite{demand_Shedding}. Let $D_t^0$ denote an optimizer of the Greedy operation~\eqref{eq:operating}. We define a fairness level $\alpha\in[0,1]$ and the benchmark dispersion 
\begin{equation}
    \Delta_t:=\max_{i\in\mathcal N}D_{i,t}^0-\min_{i\in\mathcal N}D_{i,t}^0\ge0\, ,\ D_{i,t}^0\in\arg\eqref{eq:operating} .\label{eq:benchmark_dis}
\end{equation}
When $\Delta_t=0$, the benchmark already dispatches every consumer equally, so the fairness constraint~\eqref{con:fairness} is automatically satisfied and carries no content; we therefore restrict attention to periods with $\Delta_t>0$. Then the fair operating problem augments~\eqref{eq:operating} with
\begin{equation}
    \max_{i\in\mathcal N}D_{i,t}^\alpha-\min_{i\in\mathcal N}D_{i,t}^\alpha
    \le (1-\alpha)\Delta_t.
    \label{con:fairness}
\end{equation}
where $D_{i,t}^\alpha$ represents the fairness-augmented dispatch.
When $\alpha=0$, the constraint recovers the profit-only benchmark from \eqref{eq:operating}; when $\alpha=1$, it enforces the most equal feasible dispatch. 

The \textbf{fairness-constrained} aggregation problem is thus defined as 
\begin{align}
\max_{D_{t}^\alpha\in \mathbb R^N}\ & \Pi^\alpha_t=\sum_{i\in\mathcal N}(\pi_t-c_i)D_{i,t}^\alpha,
\text{ s.t. \eqref{eq:oracle_box}, \eqref{eq:oracle_aggregate}, \eqref{eq:benchmark_dis}, \eqref{con:fairness} } 
\end{align}
The resulting non-anticipatory long-run profit is $\sum_{t=1}^T \Pi_t^\alpha$. Because the fairness constraint is defined relative to the benchmark dispatch, the difference $D_t^\alpha-D_t^0$ at a given availability state and market realization isolates the effect of fairness itself, independent of subsequent state evolution. Throughout, we use the non-anticipatory profit-maximizing policy as the benchmark and interpret fairness as an implementable proxy for the intertemporal value that a fully informed Oracle could capture. 

\section{Analysis of the Fairness Constraint}
We analyze the fairness impact on the operating model and show that fairness incurs costs when it induces curtailment of aggregate energy collection, while reallocation across consumers raises aggregate availability.

\subsection{Fairness results in service curtailment}
We first show that, compared to the Greedy operation, the fairness-constrained operation may reduce aggregate energy collection. By restricting dispatch dispersion through \eqref{con:fairness}, the fairness constraint shifts collection toward higher-cost consumers. When the availability of these consumers is insufficient to absorb the reallocated service, the optimization may be forced to reduce total collection, resulting in \emph{curtailment}. More generally, the trivial allocation $D_{i,t}=0$ for all $i$ satisfies \eqref{con:fairness} exactly, demonstrating that perfect equality can be achieved by foregoing service altogether. The next proposition characterizes the conditions under which fairness induces aggregate energy curtailment.

\begin{proposition}[Condition for fairness-induced curtailment]\label{lem:curtail_cause}
Let $\underline A_t:=\min_{i\in\mathcal N}A_{i,t}$. The fairness constraint~\eqref{con:fairness} introduces curtailment $R_t(\alpha):=\sum_{i\in\mathcal N}D_{i,t}^0-\sum_{i\in\mathcal N}D_{i,t}^\alpha>0$ whenever
\begin{equation}
\sum_{i\in\mathcal N}D_{i,t}^0 > \sum_{i\in\mathcal N}\min\bigl\{A_{i,t},\;\underline A_t+(1-\alpha)\Delta_t\bigr\}.
\label{eq:curtail_condition}
\end{equation}
\end{proposition}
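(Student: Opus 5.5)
The plan is to bound the aggregate energy of every feasible fairness-constrained dispatch from above, uniformly over the feasible set. Comparing that bound with the Greedy total $\sum_i D_{i,t}^0$ then gives $R_t(\alpha)>0$ under~\eqref{eq:curtail_condition}. The argument never needs the optimizer's structure, only feasibility.

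Fix any $D_t^\alpha$ feasible for the fairness-constrained problem, so it satisfies~\eqref{eq:oracle_box}, \eqref{eq:oracle_aggregate} and~\eqref{con:fairness}.

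First, I would bound the minimum dispatch. Let $j\in\arg\min_i A_{i,t}$. By~\eqref{eq:oracle_box},
\begin{equation*}
\min_{i}D_{i,t}^\alpha\le D_{j,t}^\alpha\le A_{j,t}=\underline A_t .
\end{equation*}

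Second, I would turn~\eqref{con:fairness} into a per-consumer cap. For every $i$,
\begin{equation*}
D_{i,t}^\alpha\le \max_k D_{k,t}^\alpha\le \min_k D_{k,t}^\alpha+(1-\alpha)\Delta_t\le \underline A_t+(1-\alpha)\Delta_t .
\end{equation*}
Combining this with $D_{i,t}^\alpha\le A_{i,t}$ gives
\begin{equation*}
D_{i,t}^\alpha\le\min\bigl\{A_{i,t},\,\underline A_t+(1-\alpha)\Delta_t\bigr\}.
\end{equation*}

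Third, summing over $i\in\mathcal N$ and applying~\eqref{eq:curtail_condition} yields
\begin{equation*}
\sum_i D_{i,t}^\alpha\le\sum_i\min\bigl\{A_{i,t},\,\underline A_t+(1-\alpha)\Delta_t\bigr\}<\sum_i D_{i,t}^0 .
\end{equation*}
This is exactly $R_t(\alpha)>0$. Because the bound holds for every feasible point, it holds in particular for the optimizer $D_t^\alpha$.

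The only delicate point is well-definedness. $\Delta_t$ depends on the choice of $D_t^0$ when~\eqref{eq:operating} has multiple optimizers, and $R_t(\alpha)$ also depends on which $D_t^0$ and $D_t^\alpha$ are used. Since the argument holds for whatever fixed $D_t^0$ (and induced $\Delta_t$) appears in~\eqref{eq:curtail_condition}, and for any choice of $D_t^\alpha$, the proof goes through unchanged. I would simply note this in a remark. Nonemptiness of the feasible set is immediate because $D_t^\alpha=0$ is feasible, as observed above.
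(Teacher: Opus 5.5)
Your proposal is correct and follows the paper's proof essentially step for step: bound $\min_i D_{i,t}^\alpha$ by $\underline A_t$ using the box constraint, turn the range cap into the per-consumer cap $\min\{A_{i,t},\underline A_t+(1-\alpha)\Delta_t\}$, then sum and compare with $\sum_i D_{i,t}^0$. Your remark that the bound holds for every feasible point, and hence for any choice of $D_t^0$ and $D_t^\alpha$, is a harmless refinement that the paper leaves implicit.
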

\begin{proof}
The range cap~\eqref{con:fairness} forces $\max_i D_{i,t}^\alpha-\min_i D_{i,t}^\alpha\le(1-\alpha)\Delta_t$, and the box constraint gives $\min_i D_{i,t}^\alpha\le\min_i A_{i,t}=\underline A_t$. Hence every $D_{i,t}^\alpha\le\min_i D_{i,t}^\alpha+(1-\alpha)\Delta_t\le\underline A_t+(1-\alpha)\Delta_t$, and combining with the box constraint, $D_{i,t}^\alpha\le\min\{A_{i,t},\,\underline A_t+(1-\alpha)\Delta_t\}$. Summing over $i$ gives $\sum_i D_{i,t}^\alpha\le\sum_i\min\{A_{i,t},\,\underline A_t+(1-\alpha)\Delta_t\}$, so whenever the right-hand side is below $\sum_i D_{i,t}^0$, the curtailment $R_t(\alpha)$ is strictly positive.
\end{proof}

Proposition~\ref{lem:curtail_cause} shows that the fairness constraint mechanically reduces total collection: the individual availability caps $A_{i,t}$ on the high-cost side prevent enough mass from being lifted to keep the fair aggregate $\sum_{i\in\mathcal N}D_{i,t}^\alpha$ at the benchmark level. We thus decompose the change $D_t^0\to D_t^\alpha$ into an aggregate \emph{curtailment} $R_t(\alpha)$ and a within-period \emph{reallocation}
\begin{equation}
M_t(\alpha):=\sum_{i\in\mathcal N}\big(D_{i,t}^\alpha-D_{i,t}^0\big)_+\ge 0,
\label{eq:relocation}
\end{equation}
which preserves the total collection by construction. Here $(x)_+ = \max\{0,x\}$.

\subsection{The benefit of reallocation from fairness}
The decomposition $D_t^0\to D_t^\alpha$ into reallocation $M_t(\alpha)$ and curtailment $R_t(\alpha)$ separates two physically distinct effects of fairness. We first analyze the effect of reallocation by considering cases where $R_t(\alpha)=0$ in every period, so the change $D^0_t\to D^\alpha_t$ is a pure reallocation $M_t(\alpha)$. The following proposition gives the structural condition under which fairness raises expected next-period aggregate availability.

\begin{proposition}[Fairness induces higher total availability]
\label{prop:availability_gain}
Suppose $\sum_{i\in \mathcal N}D_{i,t}^\alpha=\sum_{i\in \mathcal N}D_{i,t}^0$ and $\alpha >0$. Under the state transition~\eqref{eq:oracle_dynamics}, if the transition function $g$ is strictly increasing and strictly concave on $[0,\bar S]$, then
\begin{equation}
    \sum_{i\in \mathcal N}\mathbb E[A_{i,t+1}^\alpha\mid \mathcal F_t] > \sum_{i\in \mathcal N}\mathbb E[A_{i,t+1}^0\mid \mathcal F_t].
\end{equation}
If $g$ is affine, then the two totals are equal.
\end{proposition}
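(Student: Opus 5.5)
The plan is to fix the conditioning information $\mathcal F_t$ and a realization of the shock $\sigma_t$, prove the inequality pointwise in $\sigma_t$, and then integrate. Fixing $\sigma_t$ is harmless because $\sigma_t$ is realized after period $t$ and enters both allocations through the same map $g(\cdot,\sigma_t)$. The dispatches $D_t^0$ and $D_t^\alpha$ are $\mathcal F_t$-measurable, as are the current states $S_{i,t}$. So the next-period states
\[
S^\alpha_{i,t+1}=\beta S_{i,t}+\rho D^\alpha_{i,t},\qquad S^0_{i,t+1}=\beta S_{i,t}+\rho D^0_{i,t}
\]
are deterministic given $\mathcal F_t$. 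The claim therefore reduces to comparing $\sum_i g(S^\alpha_{i,t+1},\sigma)$ with $\sum_i g(S^0_{i,t+1},\sigma)$ for each fixed $\sigma$.

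\textbf{Affine case.} If $g(s,\sigma)=a(\sigma)s+b(\sigma)$, then
\[
\sum_i g(S_{i,t+1},\sigma)=a(\sigma)\Bigl(\beta\sum_i S_{i,t}+\rho\sum_i D_{i,t}\Bigr)+Nb(\sigma).
\]
By the hypothesis $\sum_i D^\alpha_{i,t}=\sum_i D^0_{i,t}$, this quantity is the same under both allocations. Taking expectations gives equality of the two totals.

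\textbf{Strictly concave case.} I will use Karamata's (majorization) inequality: if $x\prec y$ in the majorization order and $x$ is not a permutation of $y$, then $\sum_i g(x_i)>\sum_i g(y_i)$ for strictly concave $g$. The core step is to show that the vector $S^\alpha_{t+1}$ is majorized by $S^0_{t+1}$. Equal totals are given. For the partial-sum condition, I would write the passage $D^0_t\to D^\alpha_t$ as a finite sequence of Robin Hood transfers of total size $M_t(\alpha)$. Each transfer moves dispatch from a consumer $j$ with $D^\alpha_{j,t}<D^0_{j,t}$ to a consumer $k$ with $D^\alpha_{k,t}>D^0_{k,t}$, and stops before the two post-transfer states cross. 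Each such transfer weakly reduces the vector in the majorization order, so the composite is majorized by the original.

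Strictness then follows from $\alpha>0$ and $\Delta_t>0$. Constraint~\eqref{con:fairness} forces the range of $D^\alpha_t$ to satisfy $(1-\alpha)\Delta_t<\Delta_t$, so $D^\alpha_t\neq D^0_t$. Hence at least one transfer is nontrivial, and strict concavity makes the inequality strict. Strict monotonicity of $g$ plays no role in the comparison; it only ensures the bound $g\le\bar A$ and the states stay in $[0,\bar S]$.

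\textbf{Main obstacle.} The delicate step is the ordering needed for the transfers to be Robin Hood transfers, i.e.\ that donors have higher next-period state than recipients. When the current states $S_{i,t}$ are equal (e.g.\ $t=1$, since $S_{i,1}=S_1$), this is immediate. Consumers that lose dispatch under fairness are exactly those with larger $D^0_{i,t}$, and the range cap moves them toward the others without overshooting. For general $t$, I would try to establish comonotonicity of $S_{i,t}$ with $D^0_{i,t}$: under the benchmark, rank-based dispatch repeatedly favors the same low-cost consumers, so higher past dispatch means both higher state and higher current greedy dispatch. I would then argue that the fair reallocation only lowers the top of the $D$-profile and raises the bottom, which preserves the ordering needed for majorization. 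If this comonotonicity cannot be shown in full generality, I would state it as the operative structural assumption, noting that it holds along the greedy trajectory. The remaining steps are routine.
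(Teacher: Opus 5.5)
Your reductions are sound. Fixing $\sigma_t$ and integrating is equivalent to, and slightly more general than, the paper's shock-averaged $f(S)=\mathbb E_\sigma[g(S,\sigma)]$. The affine case is immediate, and getting strictness from $D^\alpha_t\neq D^0_t$ via $\alpha\Delta_t>0$ matches the paper.

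\textbf{The gap.} The concave case is not proved. The ordering you flag as the ``main obstacle'' is the entire content of the claim, and falling back on assuming it changes the statement. It is not a technicality, because the conclusion fails at a state where a cheaper consumer has a lower participation state. Take $c_1<c_2<c_3$, $A_{\cdot,t}=(1,6,10)$ and $\bar D_t=6$, so $D^0_t=(1,5,0)$ and $\Delta_t=5$. At $\alpha=1/2$ the fair optimum is $(1,3.5,1.5)$, a pure reallocation from consumer~2 to consumer~3. If $\beta S_{3,t}>\beta S_{2,t}+5\rho$, this moves dispatch from a lower to a higher next-period state, so the concave total strictly falls. No non-crossing transfer decomposition exists there.

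So $S^\alpha_{t+1}\prec S^0_{t+1}$ must be derived from the model's primitives. Your remark that monotonicity of $g$ ``plays no role'' is backwards: monotonicity is exactly what closes the gap. Also, an invariant valid only along the greedy trajectory would not suffice, because the proposition is invoked at states generated by the fair and slack policies.

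\textbf{How the paper closes it.} Its Step~1 supplies the missing invariant by induction on $\tau$, with consumers indexed by increasing cost.
\begin{enumerate}
\item At $\tau=1$ all states, hence all availabilities, coincide.
\item Whenever $A_{1,\tau}\ge\cdots\ge A_{N,\tau}$, every optimizer of the greedy \emph{or} the fairness-constrained problem is sorted, $D_{1,\tau}\ge\cdots\ge D_{N,\tau}$. If $D_i<D_j$ with $i<j$, shifting a small $\varepsilon$ from $j$ to $i$ respects the box constraint (since $A_i\ge A_j\ge D_j>D_i$), the aggregate constraint and the range cap. It also raises profit by $(c_j-c_i)\varepsilon>0$, a contradiction.
\item The linear update then keeps $S_{\cdot,\tau+1}$ sorted. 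Since $g(\cdot,\sigma_\tau)$ is increasing with a \emph{common} shock, $A_{\cdot,\tau+1}$ is sorted too.
\end{enumerate}

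\textbf{Finishing your argument.} With this invariant, your ``lowers the top, raises the bottom'' intuition becomes a proof. The greedy $D^0_t$ equals $A_{i,t}\ge D^\alpha_{i,t}$ below the marginal consumer and $0\le D^\alpha_{i,t}$ above it. Using equal totals for $m$ past the marginal index, this gives $\sum_{i\le m}D^\alpha_{i,t}\le\sum_{i\le m}D^0_{i,t}$ for every $m$, with equality at $m=N$. Multiply by $\rho$ and add $\beta\sum_{i\le m}S_{i,t}$. Because $S^0_{t+1}$ and $S^\alpha_{t+1}$ are both sorted in cost order, these are sums of their $m$ largest entries, so $S^\alpha_{t+1}\prec S^0_{t+1}$. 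Two distinct vectors sorted in the same order are not permutations of each other, so strict Karamata yields the strict inequality.
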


Proposition~\ref{prop:availability_gain} shows that, when $g$ is increasing and concave, fairness increases expected aggregate availability in the next period, thereby recovering part of the intertemporal value captured by the Oracle problem. The monotonicity assumption reflects that a higher participation state should not decrease future availability, while concavity captures diminishing returns: additional participation has a larger effect when the participation state is low than when it is already high. By limiting concentration, fairness reallocates participation toward high-cost consumers with lower states, where the marginal gain in availability is greater. This condition is also practically reasonable. In economics, the same improvement in an attribute often yields a larger marginal gain when the initial level is low~\cite{horowitz2007test}, and empirical evidence from demand response documents a similar saturation pattern, as reflected in Fig.~\ref{fig:empirical} and related studies~\cite{sarran2021data, chen2023saturation}.

\subsection{The cost of curtailment from fairness}\label{sec:cost_fairness}
Although reallocation incurs a contemporaneous cost by shifting collection from low-cost to high-cost consumers, it sustains higher future availability via Proposition~\ref{prop:availability_gain}. Curtailment offers no such intertemporal offset. The next proposition shows that curtailment imposes two distinct costs: a contemporaneous loss from foregone profitable sales, and an intertemporal loss from a reduced next-period availability state.

\begin{proposition}[Curtailment imposes only costs]\label{prop:curtail_cost}
Suppose $g$ is strictly increasing on $[0,\bar S]$. Consider any nonnegative curtailment profile $(r_i)_{i\in\mathcal N}$ with $r_i\ge 0$ and $\sum_i r_i=R>0$, applied to a dispatch $D_t$ to produce $\tilde D_{i,t}:=D_{i,t}-r_i\ge 0$. Then:
\begin{enumerate}
    \item (Contemporaneous profit loss.) $\sum_i(\pi_t-c_i)\tilde D_{i,t} < \sum_i(\pi_t-c_i)D_{i,t}$.
    \item (Intertemporal availability loss.) $\sum_i\mathbb E[\tilde A_{i,t+1}\mid\mathcal F_t] < \sum_i\mathbb E[A_{i,t+1}\mid\mathcal F_t]$.
\end{enumerate}
where $\tilde A_{i,t+1}=g(\beta S_{i,t}+\rho\tilde D_{i,t},\sigma_t)$ and $A_{i,t+1}=g(\beta S_{i,t}+\rho D_{i,t},\sigma_t)$.
\end{proposition}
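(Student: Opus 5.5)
Both parts follow from termwise comparisons, because curtailment is a nonnegative perturbation of every coordinate with strictly positive total. The plan is to write each difference as a sum of nonnegative terms and show that at least one term is strictly positive.

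For part 1, the profit loss equals $\sum_i(\pi_t-c_i)D_{i,t}-\sum_i(\pi_t-c_i)\tilde D_{i,t}=\sum_i(\pi_t-c_i)r_i$. By the margin assumption in~\eqref{eq:linear_cost}, $\pi_t>c_i$ for every $i$. Let $\underline m_t:=\min_i(\pi_t-c_i)>0$, which is well defined because $\mathcal N$ is finite. Since $r_i\ge0$, we get $\sum_i(\pi_t-c_i)r_i\ge \underline m_t R>0$, which gives the strict inequality.

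For part 2, I would work pathwise in the shock $\sigma_t$ and take conditional expectations only at the end. Fix $\sigma_t$. Since $\tilde D_{i,t}\le D_{i,t}$ and $\rho>0$, the argument $\beta S_{i,t}+\rho\tilde D_{i,t}$ is at most $\beta S_{i,t}+\rho D_{i,t}$. Because $g(\cdot,\sigma_t)$ is increasing, this gives $\tilde A_{i,t+1}\le A_{i,t+1}$ for every $i$. Because $R>0$, some index $j$ has $r_j>0$, so the inequality for $j$ is strict by strict monotonicity of $g$. Summing over $i$ gives $\sum_i\tilde A_{i,t+1}<\sum_i A_{i,t+1}$ for every realization of $\sigma_t$.

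It remains to pass to $\mathbb E[\cdot\mid\mathcal F_t]$. The quantities $S_{i,t}$, $D_{i,t}$, and $r_i$ are $\mathcal F_t$-measurable, and $\sigma_t$ is realized after $t$. Both sums are bounded by $N\bar A$, so the conditional expectations exist. Monotonicity of conditional expectation then gives the weak inequality. For strictness, note that the difference $\sum_i(A_{i,t+1}-\tilde A_{i,t+1})$ is almost surely strictly positive, and a conditional expectation of an almost surely strictly positive variable is almost surely strictly positive. Alternatively, one can argue with the conditional law of $\sigma_t$, in the same way Proposition~\ref{prop:availability_gain} is evaluated.

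Two steps need care. The first is confirming that both post-dispatch states lie in $[0,\bar S]$, where strict monotonicity is assumed. This holds because $\tilde D_{i,t}\ge0$ and, by the standing assumption, the state space is $[0,\bar S]$. The second, and the only subtle point, is the strictness of the conditional expectation. I expect that to be the main obstacle, and I would handle it by the almost-sure positivity argument above.
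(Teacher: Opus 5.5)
Your proposal is correct and follows the paper's proof essentially step for step. Part (1) uses the identity $\sum_i(\pi_t-c_i)(\tilde D_{i,t}-D_{i,t})=-\sum_i(\pi_t-c_i)r_i<0$, and part (2) uses $\tilde S_{i,t+1}-S_{i,t+1}=-\rho r_i\le 0$ plus strict monotonicity of $g$ at an index with $r_j>0$. The only difference is that you compare pathwise in $\sigma_t$ before conditioning and justify the strictness of the conditional expectation explicitly, whereas the paper leaves that implicit by writing $\mathbb E[\cdot\mid\mathcal F_t]$ as $\mathbb E_\sigma[\cdot]$.
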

\begin{proof}
\emph{Part (1).} The difference of contemporaneous profit with $\tilde D_{i,t}$ and $D_{i,t}$ is calculated by $\sum_i(\pi_t-c_i)(\tilde D_{i,t}-D_{i,t})=-\sum_i(\pi_t-c_i)r_i <  0$, since $\pi_t-c_i>0$ for all $i,t$.

\emph{Part (2).} Under~\eqref{eq:oracle_dynamics}, $\tilde S_{i,t+1}-S_{i,t+1}=\beta S_{i,t}+\rho \tilde D_{i,t} - (\beta S_{i,t}+\rho D_{i,t}) = -\rho r_i\le 0$. Since $g$ is increasing in $S$, $\mathbb E[\tilde A_{i,t+1}\mid\mathcal F_t] = \mathbb E_\sigma[g(\tilde S_{i,t+1},\sigma)] \leq \mathbb E_\sigma[g(S_{i,t+1},\sigma)] = \mathbb E[A_{i,t+1}\mid\mathcal F_t]$. Summing over $i$ yields a strict inequality, since $\sum_i r_i=R>0$ and the strict monotonicity of $g$.
\end{proof}

Proposition~\ref{prop:curtail_cost} shows that curtailment is a pure tax on the aggregator: it lowers current profit because the curtailed units would have been sold at a strictly positive margin, and it lowers expected next-period availability through the reduced participation state. This motivates a fairness design that suppresses $R_t(\alpha)$ while preserving the reallocation component $M_t(\alpha)$.

\section{Augmented Fairness Design and Analysis}
In this section, we augment the non-anticipatory operating model with a slack variable in the fairness constraint to reduce curtailment costs, and analyze how slack reshapes the profitability of fairness. 

\subsection{Slack-augmented operating model}
Proposition~\ref{prop:curtail_cost} shows that curtailment always reduces profit, and Proposition~\ref{lem:curtail_cause} characterizes when curtailment arises: the interaction of individual availability caps~\eqref{eq:oracle_box} with the fairness range cap~\eqref{con:fairness}. In principle, the aggregator could avoid curtailment by tuning $\alpha$ ex-ante, but this requires knowledge of consumer preferences and future market realizations that are not available in the practical settings (the same reason it cannot solve the Oracle problem~\eqref{eq:oracle} directly). We therefore augment the operating problem~\eqref{eq:operating} with two coupled modifications: a slack variable $s_t\ge 0$ that relaxes the fairness constraint endogenously when strict fairness would induce curtailment, and a no-curtailment constraint that ties the slack to its intended purpose.

We design the \textbf{slack-augmented fairness} operating problem at period $t$ as
\begin{subequations}\label{eq:operating_slack}
\begin{align}
\max_{D_t\in \mathbb R^N,\,s_t}\ &\sum_{i\in\mathcal N}(\pi_t-c_i)D_{i,t}-\lambda s_t, \label{eq:op_obj_slack}\\
\text{ s.t. } & \text{\eqref{eq:oracle_box} and \eqref{eq:oracle_aggregate} } \nonumber\\
& \max_{i\in\mathcal N}D_{i,t}-\min_{i\in\mathcal N}D_{i,t}\le(1-\alpha)\Delta_t + s_t, \label{con:fairness_slack}\\
& \sum_{i\in\mathcal N}D_{i,t} = \sum_{i\in\mathcal N}D_{i,t}^0, \label{con:nocurtail}\\
& s_t\ge 0, \label{con:slack_nonneg}
\end{align}
\end{subequations}
where \eqref{eq:oracle_box} and \eqref{eq:oracle_aggregate} are the individual-availability and aggregate-service constraints shared by all operating models. Constraint~\eqref{con:nocurtail} requires the aggregate dispatch to match the Greedy benchmark, thereby eliminating curtailment, while~\eqref{con:fairness_slack} allows the fairness cap to be relaxed by $s_t$ whenever necessary to satisfy~\eqref{con:nocurtail}. The penalty term $\lambda s_t$ discourages unnecessary relaxation of the fairness constraint. Let $\mathcal C_1:=\max_i c_i-\min_i c_i$. Under the no-curtailment equality, an additional unit of slack can only improve the unpenalized objective by reallocating one unit from a higher-cost consumer to a lower-cost consumer, whose value is at most $\mathcal C_1$. Thus choosing $\lambda\ge\mathcal C_1$ ensures that slack is activated only to avoid violating \eqref{con:nocurtail}, i.e., when the fairness constraint would otherwise induce curtailment.

Note that the penalty $\lambda s_t$ in~\eqref{eq:op_obj_slack} is a regularizer that shapes the aggregator's behavior; it does not correspond to a payment the aggregator actually incurs. Accordingly, when evaluating realized aggregator profit, we report
\begin{equation}
\Pi_t^{\alpha,\lambda}=\sum_{i\in\mathcal N}(\pi_t-c_i)D_{i,t}^{\alpha,\lambda},
\label{eq:realized_profit}
\end{equation}
which excludes the penalty term, where $D_{i,t}^{\alpha,\lambda}$ represents the optimized dispatch in the slack-augmented fairness operation with fairness $\alpha$ and slack price $\lambda$.

\subsection{Trade-off with strict fairness}\label{sec:dominance}
By Proposition~\ref{prop:curtail_cost}, curtailment always reduces aggregator profit. The slack-augmented model is constructed precisely to eliminate this cost: in periods where strict fairness would induce curtailment, the slack relaxes the fairness cap, thereby preventing the curtailment loss. In periods without curtailment under strict fairness, the slack remains zero, and the two models coincide. The next theorem formalizes this dominance.

\begin{theorem}[Slack dominates strict fairness in realized profit]\label{thm:dominance}
Fix $\alpha\in[0,1]$ and any finite $\lambda\ge\mathcal{C}_1$. Then for every period $t$,
\begin{equation}
\Pi_t^{\alpha,\lambda}\ge\Pi_t^\alpha,
\label{eq:dominance}
\end{equation}
with equality whenever $R_t(\alpha)=0$ and strict inequality whenever $R_t(\alpha)>0$. Consequently, $\sum_{t=1}^T\Pi_t^{\alpha,\lambda}\ge\sum_{t=1}^T\Pi_t^\alpha$, with strict inequality whenever $R_t(\alpha)>0$ for at least one $t$.
\end{theorem}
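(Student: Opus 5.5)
The plan is to compare, period by period, the optimal value of the slack-augmented problem~\eqref{eq:operating_slack} with the realized profit $\Pi_t^\alpha$ of the strict fairness-constrained problem, using feasibility arguments. The horizon statement then follows by summing over $t$. Write $V_t^{\alpha,\lambda}:=\Pi_t^{\alpha,\lambda}-\lambda s_t^\ast$ for the optimal penalized objective, so that $\Pi_t^{\alpha,\lambda}\ge V_t^{\alpha,\lambda}$ because $s_t^\ast\ge0$. The proof splits into the two cases $R_t(\alpha)=0$ and $R_t(\alpha)>0$.

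\emph{Case $R_t(\alpha)=0$.} Here $\sum_iD^\alpha_{i,t}=\sum_iD^0_{i,t}$, so $(D_t^\alpha,s_t=0)$ satisfies \eqref{eq:oracle_box}, \eqref{eq:oracle_aggregate}, \eqref{con:fairness_slack}, \eqref{con:nocurtail} and \eqref{con:slack_nonneg}. This gives $\Pi^{\alpha,\lambda}_t\ge V^{\alpha,\lambda}_t\ge\Pi^\alpha_t$, which is the inequality \eqref{eq:dominance}.

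For equality, I will show that the optimizer sets $s_t^\ast=0$, since then $D_t^{\alpha,\lambda}$ is feasible for the strict problem and $\Pi_t^{\alpha,\lambda}\le\Pi_t^\alpha$. The argument is a perturbation. Suppose $s_t^\ast>0$. Decreasing $s$ by $\varepsilon$ forces a transfer of mass from the consumers at the maximum level to those at the minimum level. The transfer keeps the total fixed and stays feasible because $(D_t^\alpha,0)$ exists, so the feasible set is nonempty at the tighter cap. The profit lost per unit of range tightened must be bounded by $\mathcal C_1$, and $\lambda\ge\mathcal C_1$ then makes the reduction weakly improving.

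\emph{Main obstacle.} Tightening the range by $\varepsilon$ may require moving up to about $N\varepsilon$ units of mass, not $\varepsilon$ units. The per-unit bound $\mathcal C_1$ therefore does not immediately control the loss. I would first try to exploit the structure of the linear program. For fixed total and range cap $w$, the optimal dispatch is of water-filling type: the cheapest consumers sit at level $m+w$, the rest sit at level $m$ capped by $A_{i,t}$, and the floor $m$ adjusts to keep the total fixed. I would then compute $d\Pi/dw$ on each linear piece and bound it by $\mathcal C_1$ times the appropriate count.

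If this bound fails for general $N$, the fallback is to use the flexibility in "any finite $\lambda\ge\mathcal C_1$" and read the threshold as the per-unit-mass price implicit in the paper's discussion after \eqref{eq:operating_slack}. I would flag that the equality claim rests on this step. Ties among optimizers are handled by selecting the zero-slack optimizer.

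\emph{Case $R_t(\alpha)>0$.} Here I would compare $D_t^{\alpha,\lambda}$ directly with $D_t^\alpha$ through the water-filling form above. Both dispatches are parametrized by a floor level and a cap width. The strict solution uses width $(1-\alpha)\Delta_t$ and loses $R_t(\alpha)$ units of total, while the slack solution restores the total $\sum_iD^0_{i,t}$.

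I would then exhibit an intermediate allocation $D'\ge D_t^\alpha$ componentwise with the restored total and range at most $(1-\alpha)\Delta_t+s_t^\ast$. For instance, raise the capped low-cost consumers up to $\min\{A_{i,t},D^0_{i,t}\}$. Adding $R_t(\alpha)>0$ units, each with margin $\pi_t-c_i>0$, gives $\sum_i(\pi_t-c_i)D'_i>\Pi_t^\alpha$ by the computation in Proposition~\ref{prop:curtail_cost}(1), read in reverse. Finally, optimality of $D_t^{\alpha,\lambda}$ among allocations sharing the slack $s_t^\ast$ gives $\Pi_t^{\alpha,\lambda}\ge\sum_i(\pi_t-c_i)D'_i$, and hence strict inequality.

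Summing the per-period results over $t$ yields $\sum_{t=1}^T\Pi_t^{\alpha,\lambda}\ge\sum_{t=1}^T\Pi_t^\alpha$. The inequality is strict as soon as one period has $R_t(\alpha)>0$.
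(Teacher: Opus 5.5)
\paragraph{Case $R_t(\alpha)=0$.} Your inequality argument here is correct. It is also cleaner than the paper's, which derives $\Pi^{\alpha,\lambda}_t=\Pi^\alpha_t$ by asserting that the two problems share an optimizer.

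The equality claim is where you stall, and the obstacle you identified is genuine: with $\lambda=\mathcal C_1$ the equality is false, so no water-filling computation can close it. Take a period with equal availabilities (e.g.\ $t=1$) and set $N=6$, $A_{i,t}=10$, $c=(1,1,1,2,2,2)$ (perturb slightly if you want distinct costs), $\pi_t=3$, $\bar D_t=30$, $\alpha=1/2$, $\lambda=\mathcal C_1=1$.
\begin{itemize}
\item Then $D^0_t=(10,10,10,0,0,0)$, $\Delta_t=10$, and the range cap is $5$.
\item Any range-$5$ dispatch has $x-y\le 15$, where $x,y$ are the cheap and expensive totals. Hence $\Pi^\alpha_t=\max\{2x+y\}\le 1.5\cdot30+0.5\cdot15=52.5$, attained by $D^\alpha_t=(7.5,7.5,7.5,2.5,2.5,2.5)$. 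This dispatch has total $30$, so $R_t(\alpha)=0$.
\item Yet $(D^0_t,s_t=5)$ is feasible for \eqref{eq:operating_slack}, with penalized value $60-5=55>52.5$. Hence $s^\ast_t>0$ and $\Pi^{\alpha,\lambda}_t\ge 55>\Pi^\alpha_t$.
\end{itemize}
Each unit of slack here shifts $1.5$ units of mass (three consumers up by $1/2$, three down by $1/2$). So the shadow price of the range cap is $1.5>\mathcal C_1$; with $N/2$ cheap and $N/2$ expensive consumers it is $N\mathcal C_1/4$. The paper's proof of this case simply asserts that positive slack cannot pay unless \eqref{con:nocurtail} requires it, resting on the one-unit-per-unit-slack heuristic after \eqref{eq:operating_slack}. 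Your $N\varepsilon$ observation is exactly why that heuristic fails. Your fallback of reinterpreting the threshold is therefore a change of hypothesis ($\lambda$ at least the shadow price of the cap), not a proof of the stated result. What survives is $\Pi^{\alpha,\lambda}_t\ge\Pi^\alpha_t$ for every $t$, strict when $R_t(\alpha)>0$, and that is all the horizon-level conclusion needs.

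\paragraph{Case $R_t(\alpha)>0$.} Your route differs from the paper's. The paper tries to show $D^{\alpha,\lambda}_t\ge D^\alpha_t$ componentwise via a block structure, then applies Proposition~\ref{prop:curtail_cost}. You only compare against an auxiliary $D'$ sharing the slack $s^\ast_t$. Yours is the sturdier argument, because the pointwise claim can fail. Keep the same costs, price and $\bar D_t$, but take $A_t=(10,10,10,10,10,1)$, $\alpha=0.8$, $\lambda=1$. Then $D^\alpha_t=(3,3,3,3,3,1)$, while the unique slack optimizer is $(D^0_t,s_t=8)$.

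Your sample construction does not work as written, though. Lifting low-cost consumers toward $D^0_{i,t}$ can exceed the relaxed cap $\underline A_t+w^\ast$, where $w^\ast:=(1-\alpha)\Delta_t+s^\ast_t$. A construction that works:
\begin{itemize}
\item Set $U_i:=\min\{A_{i,t},\underline A_t+w^\ast\}$. The argument of Proposition~\ref{lem:curtail_cause} gives $D^\alpha_t\le U$ and $D^{\alpha,\lambda}_t\le U$ componentwise. Hence $\sum_iU_i\ge\sum_iD^0_{i,t}$, and $U$ has range at most $w^\ast$.
\item Choose $\theta\in(0,1]$ so that $D':=D^\alpha_t+\theta(U-D^\alpha_t)$ has total $\sum_iD^0_{i,t}$.
\item $D'$ satisfies the box and aggregate constraints. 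Its range is at most $w^\ast$ because $x\mapsto\max_{i,j}(x_i-x_j)$ is convex.
\end{itemize}
Then $(D',s^\ast_t)$ is feasible, and the rest of your argument gives $\Pi^{\alpha,\lambda}_t\ge\sum_i(\pi_t-c_i)D'_i>\Pi^\alpha_t$.
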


Theorem~\ref{thm:dominance} shows that the slack-augmented model weakly dominates the model with strict fairness in realized profit at every period, with strict improvement exactly in the periods where strict fairness induces curtailment. The dominance holds for any $\lambda \geq \mathcal{C}_1$, because the penalty only affects which feasible solution the aggregator selects, not whether the no-curtailment outcome is feasible. Since slack variables relax the fairness range cap, the de facto fairness level may be lower than the declared fairness level $\alpha$. The penalty parameter $\lambda$ controls the extent of this relaxation by determining how aggressively the aggregator sacrifices fairness to avoid curtailment. We quantify the resulting de facto fairness levels in the Case Study. 
 % The role of $\lambda$ is instead to shape the de facto fairness, by trading off how aggressively the aggregator relaxes the fairness cap when curtailment threatens.

\subsection{Profitability of slack-augmented fairness}
We evaluate the profitability of the slack-augmented fairness model,
focusing on the requirement for the market conditions. A key structural
advantage of the slack-augmented model is that it operates entirely within
the no-curtailment regime: by~\eqref{con:nocurtail},
$\sum_i D_{i,t}^{\alpha,\lambda}=\sum_i D_{i,t}^0$ in every period, so
$R_t^{\alpha,\lambda}=0$. The slack-augmented adjustment is therefore a
pure reallocation. From Proposition~\ref{prop:availability_gain},
reallocation from fairness increases expected aggregate availability, which
translates into higher long-run profit when future market events monetize
that availability. The next theorem derives an event-level threshold on
the next-event market signal that guarantees that fairness is paid off.

For each event pair $(t,t{+}1)$, define the following
$\mathcal F_{t+1}$-measurable quantities (functions of the realized shock
$\sigma_t$ through the next-event availabilities):
\begin{equation}
\begin{aligned}
&\Delta A_{t+1}:=\sum_i\bigl(A^{\alpha,\lambda}_{i,t+1}-A^0_{i,t+1}\bigr),\quad
\delta_{t+1}:=\bar D_{t+1}-\sum_i A^0_{i,t+1},\\
&C^{\mathrm{re}}_t:=\sum_i c_i\bigl(D^{\alpha,\lambda}_{i,t}-D^0_{i,t}\bigr),\quad
C^{\mathrm{mix}}_t:=\sum_i c_i\bigl(A^{\alpha,\lambda}_{i,t+1}-A^0_{i,t+1}\bigr),\\
&\bar c^{0}_{t+1}:=\frac{\sum_i c_i A^0_{i,t+1}}{\sum_i A^0_{i,t+1}},\quad
c_{\max}:=\max_{i\in\mathcal N}c_i.
\end{aligned}
\label{eq:threshold_quantities}
\end{equation}
$\Delta A_{t+1}$ is the slack-induced availability gain at $t{+}1$, $\delta_{t+1}$ is the energy requirement cap-limited dispatched volume, $C^{\mathrm{re}}_t$ is the in-period reallocation cost, $C^{\mathrm{mix}}_t$ is the cost of the availability-mix change, and $\bar c^{0}_{t+1}$, $c_{\max}$ are the benchmark availability-weighted average and maximum consumer costs.

\begin{theorem}[Event-level price and quantity threshold]
\label{thm:threshold_slack}
Fix $\alpha\in(0,1]$, $\lambda\ge\mathcal C_1$, and $g$ strictly concave
and strictly increasing on $[0,\bar S]$. The slack-augmented design at
level $\alpha$ contributes non-negatively to realized profit on the
event pair $(t,t{+}1)$ under either of the following conditions:
\begin{itemize}
\item Case 1 (supply-binding): if and only if
\begin{equation}
\bar D_{t+1}\ge\sum_i A^{\alpha,\lambda}_{i,t+1},\quad
\pi_{t+1}\ge\frac{C^{\mathrm{re}}_t+C^{\mathrm{mix}}_t}{\Delta A_{t+1}};
\label{eq:threshold_case1}
\end{equation}
\item Case 2 (energy requirement-binding): if
\begin{align}
&\sum_i A^0_{i,t+1} <\bar D_{t+1}<\sum_i A^{\alpha,\lambda}_{i,t+1},\nonumber\\
&\pi_{t+1}\ge c_{\max}+\frac{C^{\mathrm{re}}_t+(c_{\max}-\bar c^{0}_{t+1})\sum_i A^0_{i,t+1}}{\delta_{t+1}}.
\label{eq:threshold_case2}
\end{align}
\end{itemize}
\end{theorem}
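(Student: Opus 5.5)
We compare, on the event pair $(t,t{+}1)$, the two-period realized profit of the slack-augmented dispatch with that of the Greedy benchmark. We start both paths from the same state at $t$, so the only difference at $t{+}1$ comes through the availabilities $A^{\alpha,\lambda}_{i,t+1}$ versus $A^0_{i,t+1}$.

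\textbf{Period-$t$ term.} By \eqref{con:nocurtail}, $\sum_i D^{\alpha,\lambda}_{i,t}=\sum_i D^0_{i,t}$, so the $\pi_t$ terms cancel. The period-$t$ profit difference is therefore exactly $-C^{\mathrm{re}}_t$, and $C^{\mathrm{re}}_t\ge 0$ because $D^0_t$ is the cost-ranked optimizer. By Proposition~\ref{prop:availability_gain}, applicable since the adjustment is a pure reallocation and $g$ is strictly concave and increasing, we expect $\Delta A_{t+1}>0$. I will treat this as the realized sign needed to divide by $\Delta A_{t+1}$ and $\delta_{t+1}$; I will state it as an assumption on the realization if only the conditional-expectation version is available. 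This sign issue is one point I expect to need care.

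\textbf{Case 1.} Here $\bar D_{t+1}\ge\sum_i A^{\alpha,\lambda}_{i,t+1}\ge\sum_i A^0_{i,t+1}$, so the aggregate cap is slack under both policies. Because $\pi_{t+1}>c_i$, the Greedy solution at $t{+}1$ dispatches every consumer fully, $D_{i,t+1}=A_{i,t+1}$, in both cases. The period-$(t{+}1)$ difference is then $\sum_i(\pi_{t+1}-c_i)(A^{\alpha,\lambda}_{i,t+1}-A^0_{i,t+1})=\pi_{t+1}\Delta A_{t+1}-C^{\mathrm{mix}}_t$. The total contribution is $\pi_{t+1}\Delta A_{t+1}-C^{\mathrm{mix}}_t-C^{\mathrm{re}}_t$. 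This is nonnegative iff the stated price bound holds, which gives the ``if and only if''.

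\textbf{Case 2.} The benchmark is still supply-binding, so its profit at $t{+}1$ is $\pi_{t+1}\sum_iA^0_{i,t+1}-\bar c^0_{t+1}\sum_i A^0_{i,t+1}$. The slack path is requirement-binding: it dispatches exactly $\bar D_{t+1}$ units, each at cost at most $c_{\max}$. Its profit is therefore at least $(\pi_{t+1}-c_{\max})\bar D_{t+1}$, and the optimal rank-based dispatch only does better. Write $\bar D_{t+1}=\sum_iA^0_{i,t+1}+\delta_{t+1}$. The difference at $t{+}1$ is then bounded below by
\begin{equation*}
(\pi_{t+1}-c_{\max})\delta_{t+1}-(c_{\max}-\bar c^0_{t+1})\sum_iA^0_{i,t+1}.
\end{equation*}
Subtracting $C^{\mathrm{re}}_t$ and using $\delta_{t+1}>0$, nonnegativity follows from the stated threshold. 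This gives a sufficient condition only, which matches the one-directional ``if''.

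The main obstacle is the lower bound in Case 2. It requires arguing that a feasible, though not optimal, allocation of $\bar D_{t+1}$ units exists within the larger availability set $\sum_iA^{\alpha,\lambda}_{i,t+1}>\bar D_{t+1}$. I would take any greedy fill as that allocation. Optimality of the period-$(t{+}1)$ operating problem then transfers the bound to the realized profit.
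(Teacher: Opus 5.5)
Your proposal is correct and follows the paper's proof essentially step for step: the period-$t$ term $-C^{\mathrm{re}}_t$ from the no-curtailment equality, full dispatch on both paths at $t{+}1$ in Case~1, and the bound $\Pi^{\alpha,\lambda}_{t+1}\ge(\pi_{t+1}-c_{\max})\bar D_{t+1}$ in Case~2. Two small refinements: you need no extra assumption for $\Delta A_{t+1}>0$, since the paper reruns the Pigou--Dalton argument of Proposition~\ref{prop:availability_gain} pathwise with $g(\cdot,\sigma_t)$ in place of the shock-averaged $f$; and your Case~2 ``obstacle'' is not one, because the $c_{\max}$ bound holds directly for the actual slack-LP dispatch (its total is forced to $\bar D_{t+1}$ by the no-curtailment constraint), whereas transferring a bound from a feasible candidate via optimality would be invalid, since that LP maximizes profit minus $\lambda s_{t+1}$ rather than realized profit.
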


Theorem~\ref{thm:threshold_slack} shows fairness operates as an \emph{investment}: a current-period reallocation loss $C^{\mathrm{re}}_t$ buys a contingent claim on the next event's joint price-energy requirement realization $(\pi_{t+1},\bar D_{t+1})$, recovering the intertemporal value that the Oracle problem captures with full foresight in a contract-friendly, implementable form.

In Case 1 the full $\Delta A_{t+1}$ clears and the threshold is necessary and sufficient; in Case 2 energy requirement caps slack at $\bar D_{t+1}$, only $\delta_{t+1}<\Delta A_{t+1}$ extra units clear, and the threshold rises strictly. Writing $a:=C^{\mathrm{re}}_t$ and $b:=(c_{\max}-\bar c^{0}_{t+1})\sum_i A^0_{i,t+1}\ge 0$ (both non-negative), the right-hand-side (RHS) in Case 2 minus the RHS in Case 1 gives
\begin{equation}
\frac{\sum_i(c_{\max}-c_i)A^{\alpha,\lambda}_{i,t+1}}{\Delta A_{t+1}}
+(a+b)\Bigl(\frac{1}{\delta_{t+1}}-\frac{1}{\Delta A_{t+1}}\Bigr) \ge 0,
\label{eq:case_ordering}
\end{equation}
The first term is the relaxation gap from the $c_{\max}$ bound, present already at the boundary $\delta_{t+1}=\Delta A_{t+1}$ and independent of $\bar D_{t+1}$. The second is the amortization gap arising when the energy requirement cap binds slack's dispatch below its full availability: it is positive whenever $\bar D_{t+1}<\sum_i A^{\alpha,\lambda}_{i,t+1}$ and grows as $\bar D_{t+1}$ decreases, so the Case 2 threshold is strictly decreasing in $\bar D_{t+1}$ on the interior. Operationally, a smaller next-event energy requirement $\bar D_{t+1}$ shrinks the volume $\delta_{t+1}$ over which the reallocation loss $C^{\mathrm{re}}_t$ can be repaid, forcing the next-event price $\pi_{t+1}$ to be correspondingly larger before fairness pays off.

Substantively, fairness pays off when the next event is jointly \emph{scarce enough} in price to clear the unit cost of the availability-mix change and \emph{large enough} in energy requirement to monetize the slack-induced volume. Because $A^0_{i,t+1}$ and $A^{\alpha,\lambda}_{i,t+1}$ depend on the realized shock $\sigma_t$ through the state transition $g(\cdot,\sigma_t)$, all threshold quantities--$C^{\mathrm{re}}_t$, $C^{\mathrm{mix}}_t$, $\Delta A_{t+1}$, $\delta_{t+1}$, $\bar c^{0}_{t+1}$, $c_{\max}$--are observable from the realized state and the consumer cost vector, so the operator can evaluate the condition event by event.

\section{Case Study}
We illustrate the proposed mechanism on a real-world case from Norway. Two empirical ingredients drive the operating model: consumer price-response behavior, identified from the iFlex demand-response dataset~\cite{hofmann2023rich}, and market conditions---prices and energy requirements---taken from the ENTSO-E day-ahead market record~\cite{ENTSO}. All the data and code are provided in~\cite{chen2026vpp}.

\subsection{Parameter estimation}

\paragraph{Heterogeneous response behavior}

We estimate consumer-specific costs $c_i$ under the linear specification via threshold identification. Under linear marginal cost, consumer $i$ is dispatched whenever $\pi_t>c_i$ and inactive otherwise, so $c_i$ is the price at which the dispatch probability transitions from low to high. For each household $i\in\mathcal N$ we fit a logistic regression of the dispatch indicator $\mathbf 1\{D_{i,t}>0\}$ on the contemporaneous price, $\mathbb P(D_{i,t}>0\mid\pi_t)=[1+\exp(-(a_i+b_i\pi_t))]^{-1}$, and extract $\hat c_i=-\hat a_i/\hat b_i$, the price at which the fitted probability crosses $0.5$. Households whose slope $\hat b_i$ is not significantly positive at the $5\%$ level are excluded for insufficient identification, leaving $257$ households. The resulting cost estimates exhibit substantial heterogeneity across consumers, as shown in Fig.~\ref{fig:parameter}.

\paragraph{State transition}

We specify $g(x)=1-\exp(-\eta x)$, whose increasing, concave, and saturating shape is consistent with the empirical response pattern in Fig.~\ref{fig:empirical}. We impose the EWMA normalization $\rho=1-\beta$, which interprets $S$ as a weighted average of past dispatch and reduces estimation to two stages. First, a household-specific AR(1) on event-to-event normalized dispatch, $r_{i,t}=\beta_i r_{i,t-1}+\mathrm{const}+\varepsilon_{i,t}$, yields $\hat\beta_i$ and $\hat\rho_i=1-\hat\beta_i$ for each of the $257$ consumers. Second, conditional on $(\hat\beta_i,\hat\rho_i)$, we reconstruct each household's latent state $S_{i,t}$ and fit $g$ to its realized state-availability pairs, producing household-specific curvature estimates $\hat\eta_i$.  To mitigate the downward bias that pooled estimation can introduce when aggregating heterogeneous concave response curves, we estimate $\eta$ at the household level and adopt a single representative value for the operating model. Restricting attention to households with regression fit $R^2\ge0.15$ (approximately 54 households) and taking the median yields $\hat\eta\approx 6$. Fig.~\ref{fig:parameter} reports the distributions of the estimated $\rho_i$ and $\eta_i$.

\paragraph{Market models}
The market primitives $(\pi_t,\bar D_t)$ are derived from the $T=225$ dispatch events in the February 2021 ENTSO-E day-ahead price and load series for bidding zone NO1~\cite{ENTSO}, matching the location and timing of the Norway field experiment. Because wholesale prices are substantially lower than the estimated participation thresholds ${\hat c_i}$, we rescale the price series by a factor of approximately $40$, placing it on a realistic VPP compensation scale while preserving its temporal variation and tail behavior. This ensures that the standing assumption $\pi_t>c_i$ holds throughout the horizon. The energy requirement $\bar D_t$ is likewise scaled to the household demand level of the iFlex experiment. Price and energy requirement are positively correlated, with a correlation coefficient of $0.58$. Fig.~\ref{fig:market} reports the resulting price and energy requirement series.

\begin{figure}[t]
    \centering
    \includegraphics[width=0.99\linewidth]{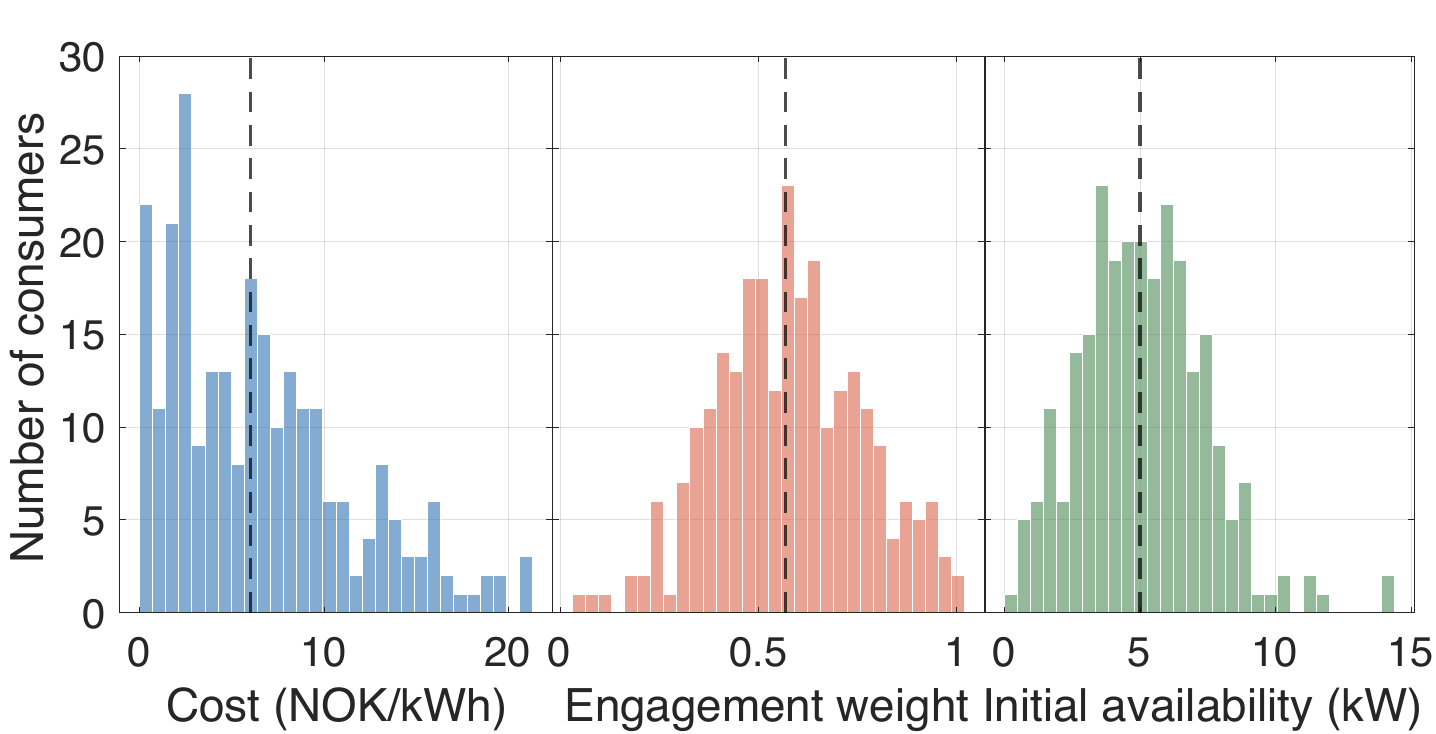}
    \caption{Estimated consumer cost parameters, engagement weight, and availability across the $257$ households.}
    \label{fig:parameter}
\end{figure}

\begin{figure}[t]
    \centering
    \includegraphics[width=0.99\linewidth]{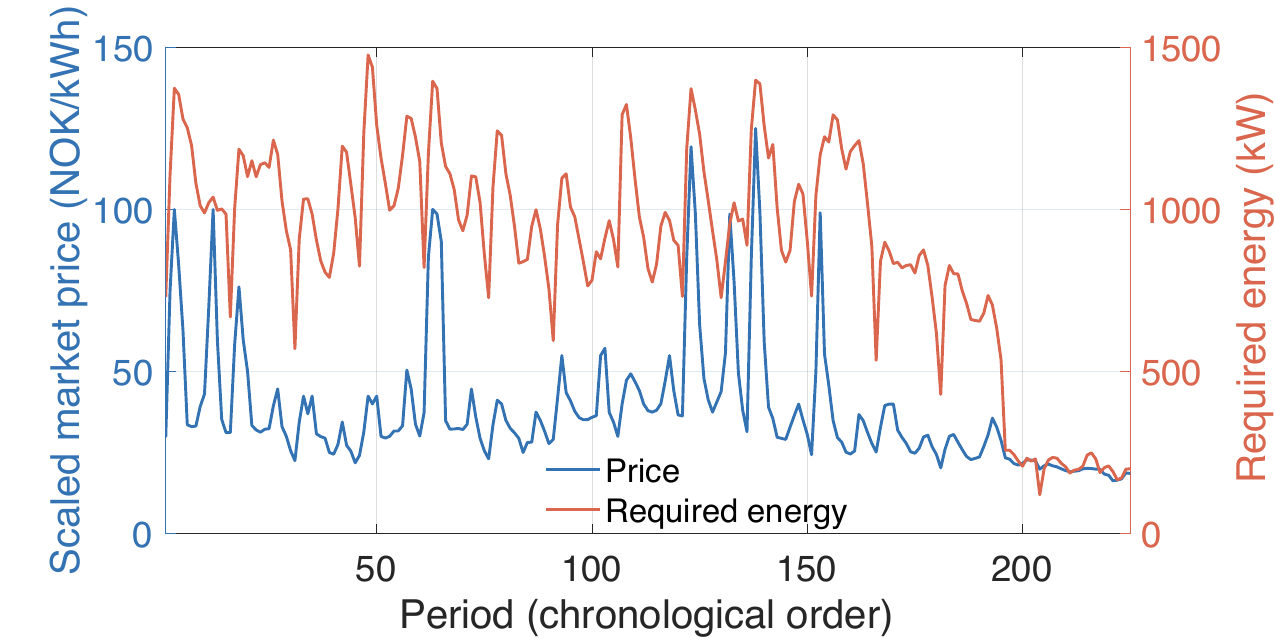}
    \caption{Scaled wholesale market price and required energy.}
    \label{fig:market}
\end{figure}

\begin{figure}[t]
    \centering
    \includegraphics[width=0.99\linewidth]{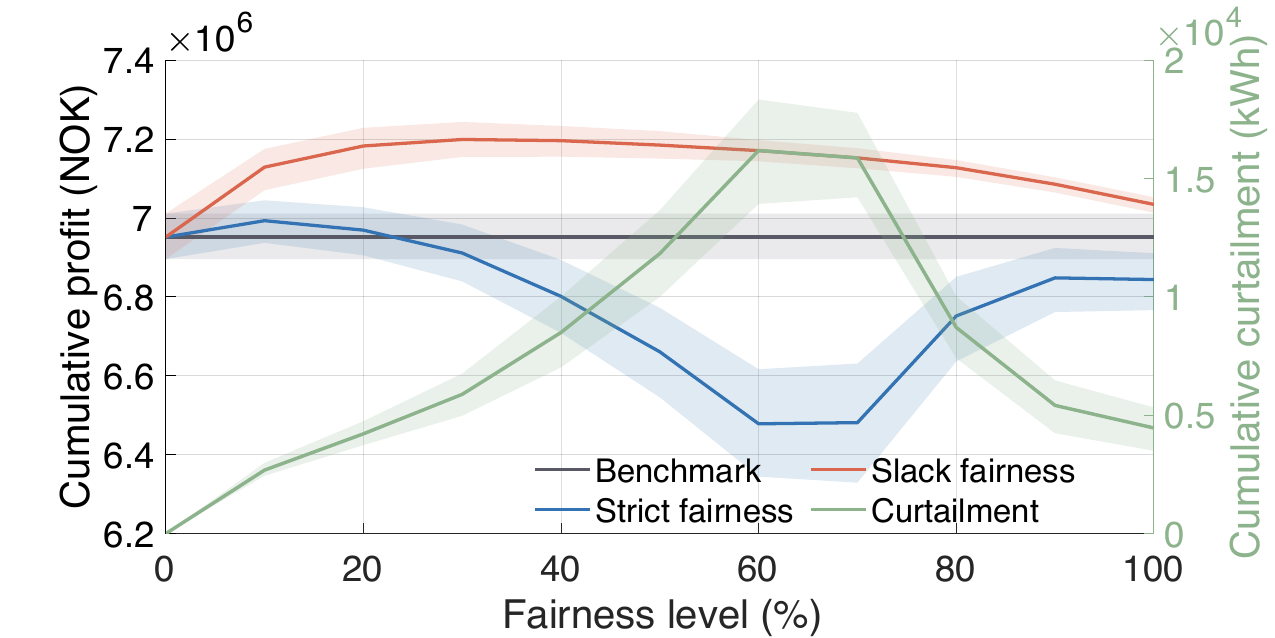}
    \caption{Cumulative aggregator profit (left axis) and curtailment (right axis)}.
    \label{fig:4profit}
\end{figure}

% \begin{figure}[t]
% \centering
% \begin{minipage}[t]{0.49\linewidth}
%   \centering
%   \includegraphics[width=\linewidth]{market.pdf}
%   \caption{Scaled wholesale market price and required energy.}
%   \label{fig:market}
% \end{minipage}\hfill
% \begin{minipage}[t]{0.49\linewidth}
%   \centering
%   \includegraphics[width=\linewidth]{4profit.pdf}
%   \caption{Cumulative aggregator profit and curtailment. Only curtailment uses the right y-axis.}
%   \label{fig:4profit}
% \end{minipage}
% \end{figure}

\subsection{Long-run profit, availability, and threshold validation}
We simulate the operating model on all $257$ consumers over $T=225$ periods. Energy provision and availability are normalized by each consumer's initial availability, and the fairness constraint is imposed on the normalized decision. To remove dependence on any single event order, all profit and curtailment curves are averaged over $100$ Monte-Carlo simulations bootstrapped from the ENTSO record, with $10$-$90\%$ ranges shown as shaded bands.

Fig.~\ref{fig:4profit} shows that strict fairness induces curtailment, particularly at moderate fairness levels where the range cap binds, consistent with Proposition~\ref{lem:curtail_cause}. The slack-augmented design eliminates this curtailment and recovers cumulative profit, as predicted by Theorem~\ref{thm:dominance}. Fig.~\ref{fig:ava} shows the aggregate availability trajectory for $\alpha=0.8$ in a representative sample. Consistent with Proposition~\ref{prop:availability_gain}, fairness increases aggregate availability relative to the benchmark, with the gains concentrated during high-price, high-energy requirement events where additional availability is most valuable. The availability paths under strict and slack fairness are nearly identical despite their profit difference, indicating that the curtailed service primarily comes from low-cost consumers whose availability is already near saturation on the concave transition function $g$.

To illustrate the availability dynamics at the individual-consumer level, Fig.~\ref{fig:individual_ava} reports four representative households selected to span the range of estimated participation costs: consumers 1 ($c_1=8.53$), 86 ($c_{86}=8.25$), 172 ($c_{172}=2.63$), and 257 ($c_{257}=7.41$) NOK/kWh. Under slack fairness with $\alpha=0.8$, where ``b'' denotes the benchmark and ``f'' denotes the fairness model, the higher-cost consumers (1 and 86) experience increased availability, while the lower-cost consumer (172) and the moderate-cost consumer (257) exhibit only modest reductions. This pattern is consistent with the mechanism in Proposition~\ref{prop:availability_gain}: fairness reallocates participation toward consumers with lower participation states, increasing their future availability. The sharp decline in availability after period $200$ is driven by the reduction in required energy rather than by the fairness mechanism itself.

Fig.~\ref{fig:threshold_validation} provides an event-level check of Theorem~\ref{thm:threshold_slack}. Each point plots the realized two-event profit gain of the slack-augmented fairness operation over the benchmark against the realized price margin relative to the theorem's threshold. Events to the right of zero satisfy the price condition and yield positive gains, while events below the threshold can be negative, consistent with the threshold characterization.

\begin{figure}[t]
    \centering
    \includegraphics[width=0.95\linewidth]{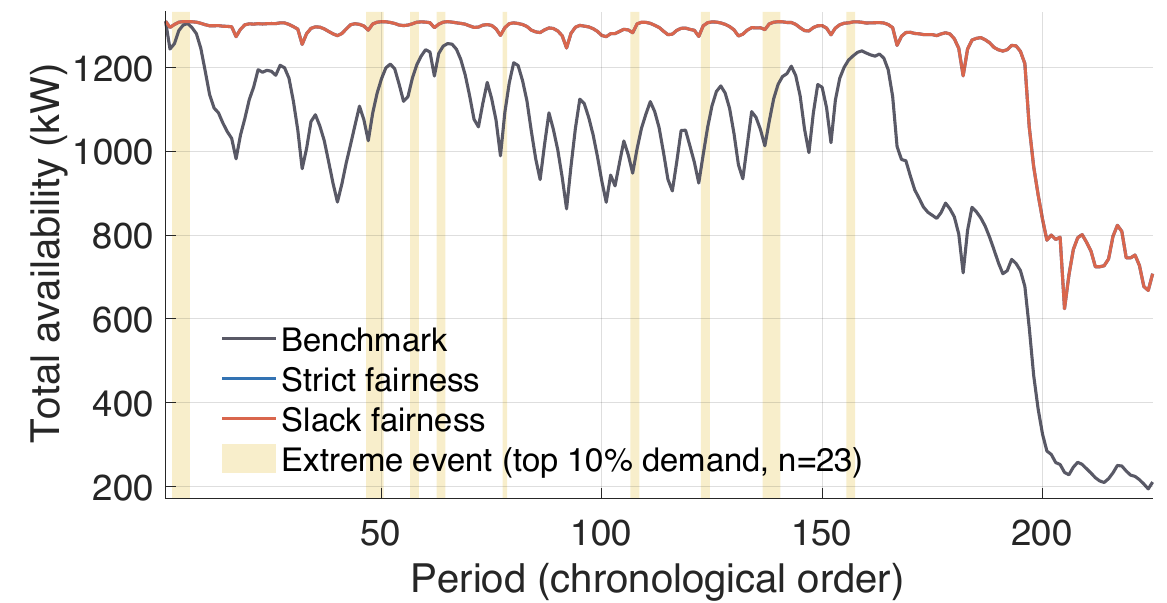}
    \caption{Aggregate consumer availability dynamics at $\alpha=0.8$.}
    \label{fig:ava}
\end{figure}

\begin{figure}[t]
    \centering
    \includegraphics[width=0.91\linewidth]{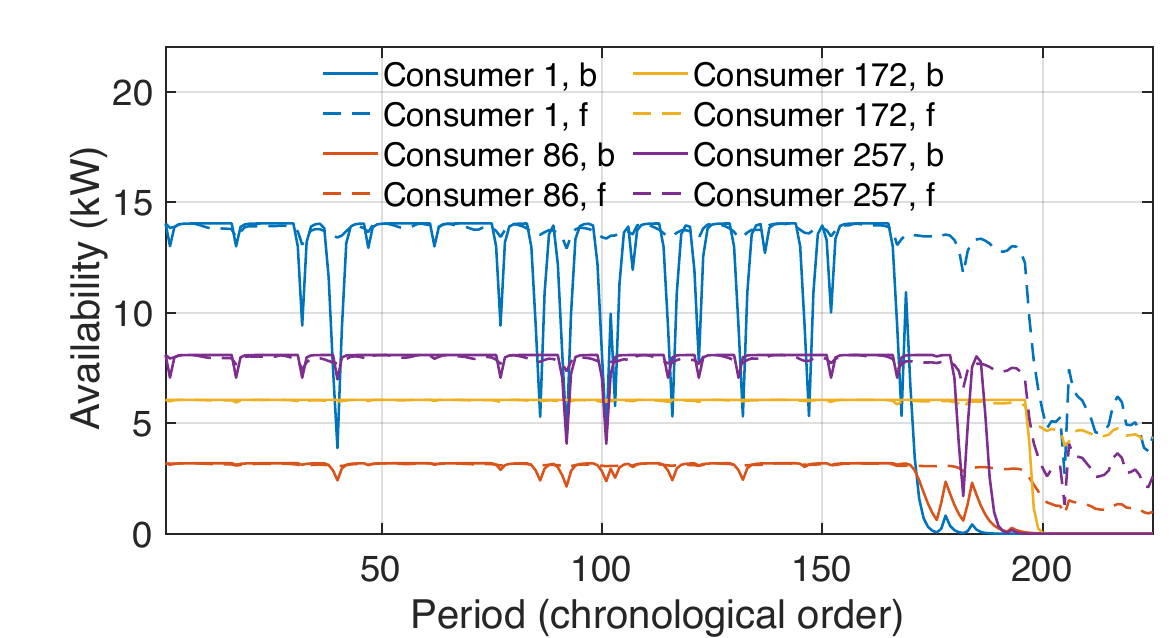}
    \caption{Exemplar consumer availability dynamics at $\alpha=0.8$.}
    \label{fig:individual_ava}
\end{figure}

\begin{figure}[t]
    \centering
    \includegraphics[width=0.95\linewidth]{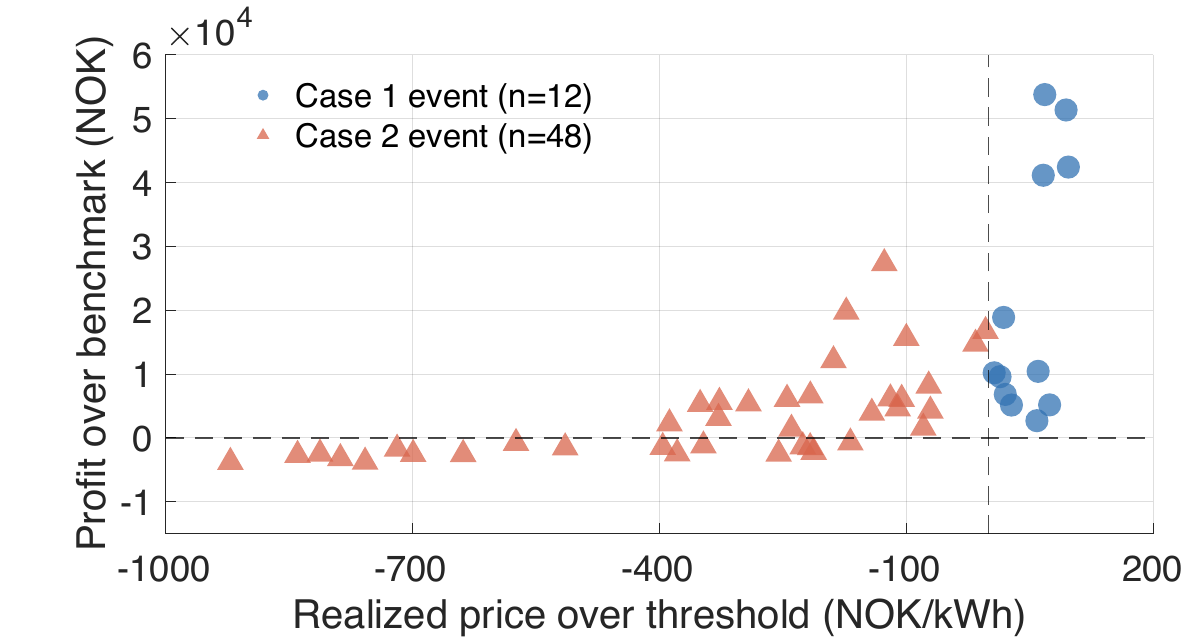}
    \caption{Event-level validation of the price threshold in Theorem~\ref{thm:threshold_slack}..}
    \label{fig:threshold_validation}
\end{figure}

% \begin{figure}[t]
% \centering
% \begin{minipage}[t]{0.49\linewidth}
%   \centering
%   \includegraphics[width=\linewidth]{ava.pdf}
%   \caption{Aggregate consumer availability dynamics at $\alpha=0.8$.}
%   \label{fig:ava}
% \end{minipage}\hfill
% \begin{minipage}[t]{0.49\linewidth}
%   \centering
%   \includegraphics[width=\linewidth]{threshold_validation.pdf}
%   \caption{Event-level validation of the price threshold in Theorem~\ref{thm:threshold_slack}.}
%   \label{fig:threshold_validation}
% \end{minipage}
% \end{figure}

\subsection{Trade-off between realized fairness and profit}
To measure the de facto fairness of realized dispatch from an external, third-party perspective, we adopt the Gini index~\cite{cowell2011measuring}
\begin{equation}
G(\alpha,\lambda):=\frac{\sum_{i\in\mathcal N}\sum_{j\in\mathcal N}|\tilde D_i^{\alpha,\lambda}-\tilde D_j^{\alpha,\lambda}|}{2N\sum_{i\in\mathcal N}\tilde D_i^{\alpha,\lambda}},
\label{eq:gini}
\end{equation}
where $\tilde D_i^{\alpha,\lambda}:=\sum_{t=1}^T D_{i,t}^{\alpha,\lambda}$ is consumer $i$'s cumulative dispatch, so $G=0$ is perfectly equal service and $G\to 1$ is full concentration on a single consumer.

\begin{figure}[t]
    \centering
    \includegraphics[width=0.95\linewidth]{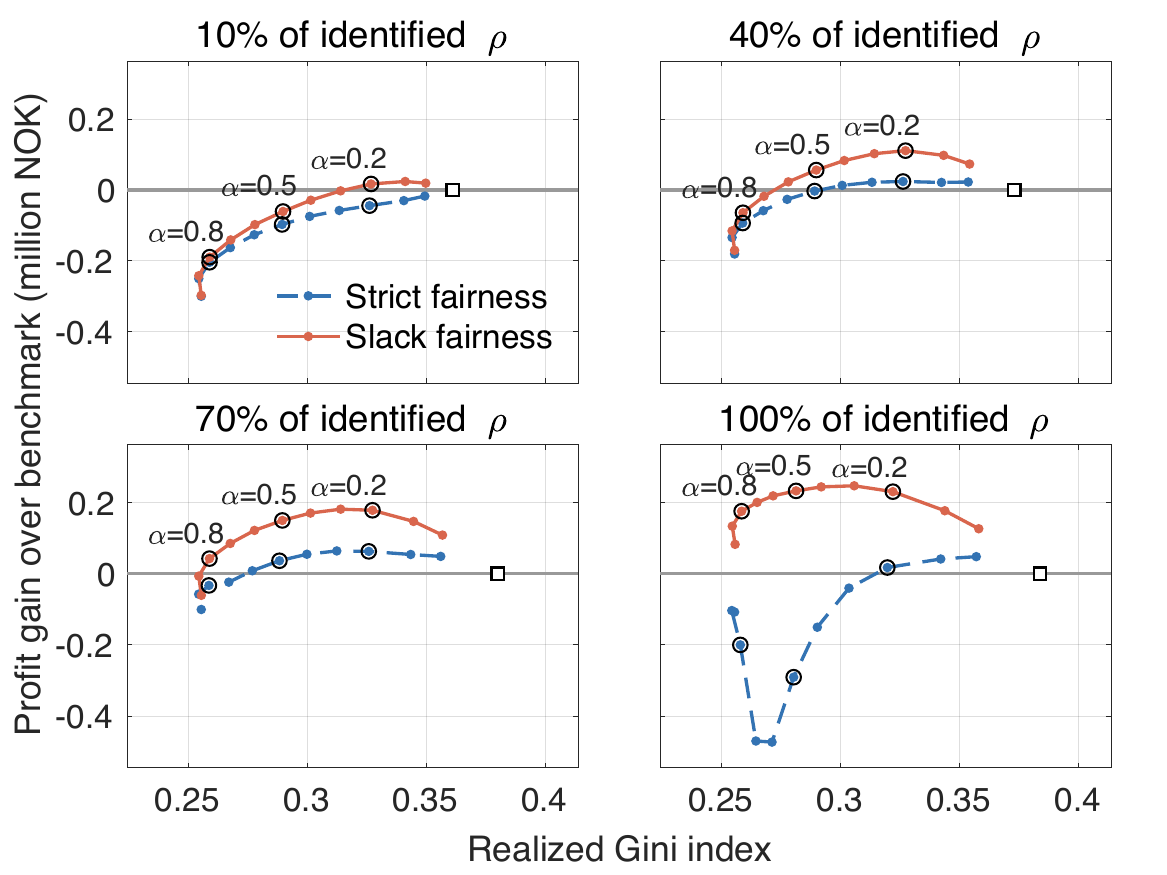}
    \caption{Pareto front of fairness-profit and its dependence on $\rho$. Squares mark the Gini index under the profit-only benchmark.}
    \label{fig:pareto}
\end{figure}

Fig.~\ref{fig:pareto} traces the realized Gini-profit trade-off. Raising $\alpha$ disperses dispatch and lowers the Gini index, from $0.36$-$0.38$ at the profit-only benchmark to about $0.25$ under near-equal dispatch. Because the intertemporal value of fairness rests entirely on the engagement weight $\rho$, we repeat the experiment with every consumer's identified $\hat\rho_i$ scaled to $10\%$, $40\%$, $70\%$, and $100\%$. Three findings emerge. First, the fairness-profit relation is strongly non-monotone: moderate fairness improves long-run profit, but aggressive equalization eventually erodes it by suppressing the most efficient consumers---under strict fairness, this can drive the profit gain negative. Second, slack fairness dominates strict fairness at every fairness level and every $\rho$ setting (Theorem~\ref{thm:dominance}), and the two are nearly vertically aligned in the figure: slack leaves the realized Gini index almost unchanged while delivering substantially higher profit, so the slack relaxation buys efficiency at almost no equity cost. Third, the profitability of fairness scales with $\rho$: as engagement weakens, the advantage shrinks, and at low $\rho$ even slack fairness becomes unprofitable at high $\alpha$, underscoring that intertemporal engagement is the channel through which fairness is monetized.

\section{Conclusion}

This paper examines fairness in VPP operations when consumer participation evolves endogenously in response to past dispatch decisions. We show that fairness has two distinct effects: it reallocates service across consumers, thereby increasing future aggregate availability under a concave participation dynamic, and it may induce curtailment, which always reduces profitability. This distinction reveals that fairness is not inherently costly. When participation reinforcement is present, a moderate degree of fairness can expand the aggregator's future flexibility and improve long-run performance.

To preserve the benefits of fairness while avoiding curtailment, we propose a slack-augmented fairness design that relaxes the fairness constraint only when necessary to maintain aggregate service. We show that this design weakly dominates strict fairness in realized profit and derive conditions under which the gains in availability from fairness outweigh the associated reallocation costs. Using consumer behavior and market data from Norway, we find that the fairness-profit relationship is non-monotone: moderate fairness can increase long-run profitability, whereas excessive equalization eventually reduces it.

More broadly, our results suggest that fairness should be viewed not only as an operational constraint or policy objective, but also as a mechanism for sustaining consumer engagement and expanding future flexibility. In practical VPP programs, participation is voluntary, and consumers are more likely to remain enrolled when they expect a reasonable opportunity to benefit from the program rather than observing rewards concentrated among a small subset of participants. From this perspective, fairness can serve as an investment in long-term participation by reinforcing the expectation that enrollment will lead to meaningful compensation opportunities over time. More generally, our findings suggest that fairness and profitability need not be opposing objectives; when participation responds to past outcomes, appropriately designed fairness mechanisms can strengthen both.

\bibliographystyle{ieeetr}
\bibliography{Ref}

\appendices

\section{Proof of Proposition~\ref{prop:availability_gain}}
\begin{proof}
Define the shock-averaged transition $f(S):=\mathbb E_{\sigma}[g(S,\sigma)]$. Since expectation preserves monotonicity and concavity, $f$ is strictly increasing and strictly concave whenever each $g(\cdot,\sigma)$ is.

Fix $t$ and condition on $\mathcal F_t$. Under linear costs the benchmark objective $\sum_{i\in \mathcal N}(\pi_t-c_i)D_{i,t}$ has a unique optimizer $D_t^0$ since $c_1<\cdots<c_N$; let $D_t^\alpha$ be any fairness optimizer. We use the transfer characterization of majorization~\cite{majorization}: $x\succ y$ iff $y$ can be obtained from $x$ via finitely many Pigou--Dalton transfers.

\emph{Step 1: ordering.}
We claim that under either benchmark or fairness, for every $\tau\le t$,
\begin{equation}
    D_{1,\tau}\ge \cdots \ge D_{N,\tau},
\quad
S_{1,\tau+1}\ge \cdots \ge S_{N,\tau+1}.
\end{equation}
At $\tau=1$, all availabilities are equal. If some feasible allocation satisfies $D_i<D_j$ for $i<j$, shifting $\varepsilon>0$ from $j$ to $i$ preserves feasibility (and the fairness range constraint, if present) but increases the objective by $(c_j-c_i)\varepsilon>0$, contradicting optimality. Hence $D_{1,1}\ge\cdots\ge D_{N,1}$, and the update~\eqref{eq:oracle_dynamics} yields the same order for $S_{i,2}$.

Assume $S_{1,\tau}\ge\cdots\ge S_{N,\tau}$. Since availability is ordered accordingly, the same exchange argument gives $D_{1,\tau}\ge\cdots\ge D_{N,\tau}$, and~\eqref{eq:oracle_dynamics} preserves the order of $S_{\tau+1}$, proving the claim.

\emph{Step 2: $D_t^0\succ D_t^\alpha$.}
Since $\alpha\Delta_t>0$, the fairness cap is strictly tighter, so $D_t^\alpha\neq D_t^0$. By assumption, $\sum_i D_{i,t}^\alpha=\sum_i D_{i,t}^0$. Because both vectors are ordered as in Step 1, define
\begin{equation}
    \mathcal N^+:=\{i:\ D_{i,t}^0>D_{i,t}^\alpha\},\quad
\mathcal N^-:=\{i:\ D_{i,t}^0<D_{i,t}^\alpha\}.
\end{equation}
Both sets are nonempty, and total surplus equals total deficit. Starting from $x^{(0)}:=D_t^0$, repeatedly transfer
\begin{equation}
  \delta:=\min\{x^{(0)}_i-D_{i,t}^\alpha,\ D_{j,t}^\alpha-x^{(0)}_j\}>0, 
\end{equation}
from any $i\in\mathcal N^+$ to any $j\in\mathcal N^-$ still away from target. Since donor coordinates are larger than receiver coordinates, each step is a Pigou--Dalton transfer; the $L_1$ distance to $D_t^\alpha$ strictly decreases, so the procedure terminates finitely at $D_t^\alpha$. Hence $D_t^0\succ D_t^\alpha$.

\emph{Step 3: strict concavity increases total expected availability.}
Let $x^{(0)}=D_t^0,\ x^{(1)},\dots,x^{(L)}=D_t^\alpha$ be the transfer sequence from Step 2, where $x^{(\ell+1)}=x^{(\ell)}-\delta_\ell e_{p_\ell}+\delta_\ell e_{q_\ell}$ with $x^{(\ell)}_{p_\ell}\ge x^{(\ell)}_{q_\ell}$; for a generic step we write the index pair $(p,q)$ and size $\sigma$, and define
\begin{equation}
   u=\beta S_{p,t}+\rho x_{p},\quad v=\beta S_{q,t}+\rho x_{q}. 
\end{equation}
By Step 1, $u\ge v$. After the transfer, the two arguments become $u-\rho\delta$ and $v+\rho\delta$. Since $f$ is strictly concave,
\begin{equation}
    f(u-\rho\delta)+f(v+\rho\delta)>f(u)+f(v).
\end{equation}
All other terms are unchanged, so each transfer strictly increases $\sum_i f(S_{i,t+1})$, and summing gives $\sum_{i\in \mathcal N} f(S_{i,t+1}^\alpha)>\sum_{i\in \mathcal N} f(S_{i,t+1}^0)$. Since $f(S_{i,t+1})=\mathbb E[A_{i,t+1}\mid\mathcal F_t]$, this is $\sum_{i\in \mathcal N}\mathbb E[A_{i,t+1}^\alpha\mid \mathcal F_t] > \sum_{i\in \mathcal N}\mathbb E[A_{i,t+1}^0\mid \mathcal F_t]$. If $f$ is affine, equality holds at every step, so the totals are equal.
\end{proof}

\section{Proof of Theorem~\ref{thm:dominance}}
\begin{proof}
Fix period $t$. By~\eqref{con:nocurtail}, the slack-augmented optimizer satisfies $\sum_{i\in\mathcal N}D_{i,t}^{\alpha,\lambda}=\sum_{i\in\mathcal N}D_{i,t}^0$. By definition, the strict fairness optimizer satisfies $\sum_{i\in\mathcal N}D_{i,t}^\alpha=\sum_{i\in\mathcal N}D_{i,t}^0-R_t(\alpha)$. Subtracting,
\begin{equation}
\sum_{i\in\mathcal N}D_{i,t}^{\alpha,\lambda}-\sum_{i\in\mathcal N}D_{i,t}^\alpha=R_t(\alpha)\ge 0.
\label{eq:agg_diff}
\end{equation}

\emph{Case 1: $R_t(\alpha)=0$.} At $s=0$ the shared box and aggregate constraints~\eqref{eq:oracle_box}--\eqref{eq:oracle_aggregate}, together with the slack-augmented constraints~\eqref{con:fairness_slack}--\eqref{con:nocurtail}, reduce to the strict fairness constraints: \eqref{con:fairness_slack} collapses to~\eqref{con:fairness}, and~\eqref{con:nocurtail} is implied by $R_t(\alpha)=0$. Since $\lambda\ge\mathcal C_1$, positive slack cannot improve the penalized objective unless it is required to satisfy~\eqref{con:nocurtail}. Thus $(D_t^\alpha,0)$ is feasible for the slack-augmented problem and any feasible $(D,0)$ is feasible for strict fairness, so the two share an optimizer and $\Pi_t^{\alpha,\lambda}=\Pi_t^\alpha$.

\emph{Case 2: $R_t(\alpha)>0$.} We invoke Proposition~\ref{prop:curtail_cost} with $D_t:=D_t^{\alpha,\lambda}$, $\tilde D_t:=D_t^\alpha$, and curtailment profile $r_i:=D_{i,t}^{\alpha,\lambda}-D_{i,t}^\alpha$. By~\eqref{eq:agg_diff}, $\sum_i r_i=R_t(\alpha)>0$; it remains to show $r_i\ge 0$ pointwise. Index consumers by increasing cost, $c_1\le\cdots\le c_N$. By the exchange argument of Step~1 in the proof of Proposition~\ref{prop:availability_gain}, every fairness optimizer is sorted, $D_{1,t}\ge\cdots\ge D_{N,t}$, and under linear costs takes the bang-bang form: a top block saturated at the availability cap $A_{i,t}$, a middle block at the common upper level $\overline D_t$, and a bottom block at the floor $\underline D_t:=\overline D_t-(1-\alpha)\Delta_t$ set by the binding range cap. Both $D_t^\alpha$ and $D_t^{\alpha,\lambda}$ share this sorted form; comparing block by block, the top block is pinned at the cap $A_{i,t}$ in both, so $r_i=0$ there. The slack solution relaxes the range cap to $(1-\alpha)\Delta_t+s_t^{\alpha,\lambda}$, so its upper level satisfies $\overline D_t^{\alpha,\lambda}=\overline D_t^\alpha+s_t^{\alpha,\lambda}\ge\overline D_t^\alpha$, giving $r_i\ge 0$ on the middle block. The remaining mass $R_t(\alpha)=\sum_i r_i>0$ is absorbed by the bottom block, which by the sorted ordering can only raise the floor, $\underline D_t^{\alpha,\lambda}\ge\underline D_t^\alpha$, so $r_i\ge 0$ there too. Hence $r_i\ge 0$ for all $i$, and Proposition~\ref{prop:curtail_cost} Part~(1) gives $\Pi_t^\alpha<\Pi_t^{\alpha,\lambda}$.

Summing~\eqref{eq:dominance} over $t$ gives the horizon-level claim, with strict inequality whenever Case 2 occurs.
\end{proof}

\section{Proof of Theorem~\ref{thm:threshold_slack}}
\begin{proof}
    We decompose the event-pair profit gap into an event-$t$ and an event-$(t{+}1)$ contribution.

\emph{Step 1: event-$t$ contribution.} The slack-augmented LP at $t$ enforces the no-curtailment constraint~\eqref{con:nocurtail}, $\sum_i D^{\alpha,\lambda}_{i,t}=\sum_i D^0_{i,t}$, so
\begin{align}
&\Pi^{\alpha,\lambda}_t-\Pi^0_t
=\sum_i(\pi_t-c_i)\bigl(D^{\alpha,\lambda}_{i,t}-D^0_{i,t}\bigr)\notag\\
&=\pi_t \sum_i\bigl(D^{\alpha,\lambda}_{i,t}-D^0_{i,t}\bigr) - \sum_i c_i\bigl(D^{\alpha,\lambda}_{i,t}-D^0_{i,t}\bigr) = -C^{\mathrm{re}}_t.\label{eq:step_t}
\end{align}

\emph{Step 2: benchmark at $t{+}1$ under scarcity.} Under either case condition, $\bar D_{t+1}>\sum_i A^0_{i,t+1}$ (in Case 1, from $\bar D_{t+1}\ge\sum_i A^{\alpha,\lambda}_{i,t+1}>\sum_i A^0_{i,t+1}$ since $\Delta A_{t+1}>0$; in Case 2, stated directly), so the aggregate energy requirement cap $\sum_i D_{i,t+1}\le\bar D_{t+1}$ does not bind at the benchmark optimum. Since $\pi_{t+1}>c_i$ for every $i$, the per-event profit $\sum_i(\pi_{t+1}-c_i)D_{i,t+1}$ is strictly increasing in each $D_{i,t+1}$, so $D^0_{i,t+1}=A^0_{i,t+1}$ for all $i$, yielding
\begin{equation}
\Pi^0_{t+1} = \sum_i(\pi_{t+1}-c_i)A^0_{i,t+1} = \Bigl(\sum_i A^0_{i,t+1}\Bigr)\bigl(\pi_{t+1}-\bar c^{0}_{t+1}\bigr).\label{eq:bench_tplus1}
\end{equation}

\emph{Step 3 event-$(t{+}1)$ contribution.}

\textbf{Case 1}: With $\bar D_{t+1}\ge \sum_i A^{\alpha,\lambda}_{i,t+1}$, the aggregate cap is non-binding at the slack-augmented LP as well, and the same argument gives $D^{\alpha,\lambda}_{i,t+1}=A^{\alpha,\lambda}_{i,t+1}$ for all $i$; the no-curtailment constraint at $t{+}1$ is automatically satisfied at the slack-state benchmark, and the fairness range cap is inactive. Hence
\begin{equation}
\begin{aligned}
    \Pi^{\alpha,\lambda}_{t+1}-\Pi^0_{t+1}
    &=\sum_i(\pi_{t+1}-c_i)\bigl(A^{\alpha,\lambda}_{i,t+1}-A^0_{i,t+1}\bigr)\\
    &=\pi_{t+1} \Delta A_{t+1}-C^{\mathrm{mix}}_t.
\end{aligned}
\label{eq:gap_tplus1_case1}
\end{equation}
Combining~\eqref{eq:step_t} and~\eqref{eq:gap_tplus1_case1},
\begin{equation}
    \bigl(\Pi^{\alpha,\lambda}_t+\Pi^{\alpha,\lambda}_{t+1}\bigr)-\bigl(\Pi^0_t+\Pi^0_{t+1}\bigr) =-C^{\mathrm{re}}_t + \pi_{t+1}\Delta A_{t+1}-C^{\mathrm{mix}}_t.
\end{equation}
Since $g(\cdot,\sigma_t)$ is strictly concave for the realized shock, the Pigou-Dalton argument of Proposition~\ref{prop:availability_gain} applies pathwise, giving $\Delta A_{t+1}>0$ for the realized $\sigma_t$; hence this is non-negative if and only if $\pi_{t+1}\ge(C^{\mathrm{re}}_t+C^{\mathrm{mix}}_t)/\Delta A_{t+1}$, establishing~\eqref{eq:threshold_case1}.

\textbf{Case 2}: Under $\sum_i A^0_{i,t+1}<\bar D_{t+1}<\sum_i A^{\alpha,\lambda}_{i,t+1}$, the slack-augmented LP at $t{+}1$ faces a binding aggregate cap. Its no-curtailment constraint forces $\sum_i D^{\alpha,\lambda}_{i,t+1} = \bar D_{t+1}$. The LP optimum cannot be written in closed form without specifying which consumers are selected by the merit ordering at the slack state; we instead bound it below by a feasible candidate dispatch that depends only on aggregates.

From $\pi_{t+1}>c_i$ for every $i\in\mathcal N$ and $\sum_i D^{\alpha,\lambda}_{i,t+1}=\bar D_{t+1}$,
\begin{align}
\Pi^{\alpha,\lambda}_{t+1}
&= \sum_i(\pi_{t+1}-c_i)D^{\alpha,\lambda}_{i,t+1} \nonumber \\
 &\ge (\pi_{t+1}-c_{\max})\sum_i D^{\alpha,\lambda}_{i,t+1}
 = (\pi_{t+1}-c_{\max})\bar D_{t+1},
\label{eq:slack_lp_lb}
\end{align}
% where $c_{\max}:=\max_i c_i$.

Combining~\eqref{eq:bench_tplus1} and~\eqref{eq:slack_lp_lb},
\begin{equation}
    \begin{aligned}
&\Pi^{\alpha,\lambda}_{t+1}-\Pi^0_{t+1}
\ge\bar D_{t+1}\bigl(\pi_{t+1}-c_{\max}\bigr)-\Bigl(\sum_i A^0_{i,t+1}\Bigr)\bigl(\pi_{t+1}-\bar c^{0}_{t+1}\bigr)\\
&=\pi_{t+1} \delta_{t+1} - \Bigl[\bar D_{t+1}\,c_{\max}-\bigl(\textstyle\sum_i A^0_{i,t+1}\bigr)\bar c^{0}_{t+1}\Bigr].\label{eq:gap_tplus1_case2}
\end{aligned}
\end{equation}
Further combining it with~\eqref{eq:step_t}, we have a sufficient condition for non-negativity
\begin{equation}
    \pi_{t+1} \delta_{t+1}\ge C^{\mathrm{re}}_t + \bar D_{t+1}\,c_{\max} - \bigl(\textstyle\sum_i A^0_{i,t+1}\bigr)\bar c^{0}_{t+1}.
\end{equation}
Writing $\bar D_{t+1}=\sum_i A^0_{i,t+1}+\delta_{t+1}$ and rearranging, dividing through by $\delta_{t+1}>0$ obtain the~\eqref{eq:threshold_case2}.
\end{proof}

\end{document}